\let\doendproof\endproof
\renewcommand{\endproof}{\hfill$\qed$\doendproof}
\newif\ifllncsvar
\makeatletter \@ifclassloaded{llncs}{ \llncsvartrue

\newcommand{\rome}{$^*$} 
\newcommand{\kit}{$^{**}$}

\author{Giordano {Da Lozzo\rome} and Ignaz Rutter\kit 
\institute{
\rome~{Department of Engineering, Roma Tre University, Italy}\\  
\email{dalozzo@dia.uniroma3.it}\\ 
\kit~{Karlsruhe Institute of Technology (KIT), Germany}\\
\email{rutter@kit.edu}
}}

}{
  \author{Giordano {Da Lozzo}\thanks{Department of Engineering, Roma
      Tre University, Italy \texttt{dalozzo@dia.uniroma3.it} } \and
    Ignaz Rutter\thanks{Karlsruhe Institute of Technology (KIT),
      Germany \texttt{rutter@kit.edu}} }
}\makeatother
\newcommand{\remove}[1]{}
\definecolor{blue}{rgb}{0.274,0.392,0.666}
\definecolor{red}{rgb}{0.627,0.117,0.156}
\newcommand{\NP}{$\mathcal{NP}$\xspace}
\newcommand{\NPC}{\mbox{\NP-complete}\xspace}
\newcommand{\NPCN}{\mbox{\NP-completeness}\xspace}
\newcommand{\NPHN}{\mbox{\NP-hardness}\xspace}
\newcommand{\sefe}{SEFE\xspace}
\newcommand{\Eg}[1]{\textcolor{OliveGreen}{$E^{#1}_3$}\xspace}
\newcommand{\Gr}{\textcolor{red}{$G_1$}\xspace}
\newcommand{\Gb}{\textcolor{blue}{$G_2$}\xspace}
\newcommand{\Gg}{\textcolor{OliveGreen}{$G_3$}\xspace}
\newcommand{\sunsefep}{{\scshape Sunflower SEFE}\xspace}
\newcommand{\sdrawing}{{$\omega$-streamed drawing}\xspace}
\newcommand{\SD}{{$\omega$-SD}\xspace}
\newcommand{\sBdrawing}{{$\omega$-streamed drawing with backbone}\xspace}
\newcommand{\SDB}{{$\omega$-SDB}\xspace}
\newcommand{\SDBp}[1]{{${#1}$-SDB}\xspace}
\newcommand{\spproblem}{{\sc Streamed Planarity}\xspace}
\newcommand{\spBproblem}{{\sc Streamed Planarity with Backbone}\xspace}
\newcommand{\spp}{{\sc SP}\xspace}
\newcommand{\spBp}{{\sc SPB}\xspace}
\newcommand{\instance}[1]{$\langle G(V_{#1}, S_{#1}), E_{#1}, \Psi_{#1} \rangle$\xspace}
\newcommand{\sefekinstance}[1]{${\langle G_i(V,E_i) \rangle }^{#1}_{i=1}$\xspace}
\newcommand{\sefeksolution}[1]{${\langle \mathcal{E}_i, A_i \rangle }^{#1}_{i=1}$\xspace}
\newcommand{\simplesefeksolution}[1]{${\langle \mathcal{E}_i\rangle }^{#1}_{i=1}$\xspace}
\newcommand{\algorithm}{{\sc ALGOCON}\xspace}
\newcommand{\algorithmbf}{{\sc \bf ALGOCON}\xspace}
\newtheorem{theorem}{Theorem}
\newtheorem{lemma}{Lemma}
\authorrunning{G. Da Lozzo and I. Rutter}
\titlerunning{Planarity of Streamed Graphs}
\title{Planarity of Streamed Graphs} 
\date{}
\begin{document}
\pagestyle{plain}
\maketitle

\begin{abstract}
  In this paper we introduce a notion of planarity for graphs that are
  presented in a streaming fashion.  A {\em streamed graph} is a
  stream of edges $e_1,e_2,\dots,e_m$ on a vertex set $V$.  A streamed
  graph is {\em $\omega$-stream planar} with respect to a positive
  integer window size $\omega$ if there exists a sequence of planar
  topological drawings $\Gamma_i$ of the graphs $G_i=(V,\{e_j \mid
  i\leq j < i+\omega\})$ such that the common graph
  $G^{i}_\cap=G_i\cap G_{i+1}$ is drawn the same in $\Gamma_i$ and in
  $\Gamma_{i+1}$, for $1\leq i < m-\omega$.  The {\sc Stream
    Planarity} Problem with window size $\omega$ asks whether a given
  streamed graph is $\omega$-stream planar.  We also consider a
  generalization, where there is an additional \emph{backbone graph}
  whose edges have to be present during each time step.  These
  problems are related to several well-studied planarity
  problems.
  
  We show that the {\sc Stream Planarity} Problem is \NPC even when
  the window size is a constant and that the variant with a backbone
  graph is \NPC for all $\omega \ge 2$.  On the positive side, we
  provide $O(n+\omega{}m)$-time algorithms for (i) the case $\omega =
  1$ and (ii) all values of $\omega$ provided the backbone graph
  consists of one $2$-connected component plus isolated vertices and
  no stream edge connects two isolated vertices. 
Our results improve on the Hanani-Tutte-style $O((nm)^3)$-time algorithm proposed by Schaefer~[GD'14] for $\omega=1$.
\end{abstract}

\section{Introduction}\label{se:introduction}

In this work we consider the following problem concerning the drawing of evolving networks.
We are given a stream of edges $e_1,e_2\dots,e_m$ with their endpoints
in a vertex set $V$ and an integer {\em window size} $\omega > 0$.
Intuitively, edges of the stream are assigned a fixed ``lifetime'' of
$\omega$ time intervals. Namely, for $1\leq i < |V|-\omega$, edge
$e_i$ will \emph{appear} at the $i$-th time instant and
\emph{disappear} at the $(i+\omega)$-th time instant.
We aim at finding a sequence of drawings $\Gamma_i$ of the graphs
$G_i=(V,\{e_j \mid i\leq j < i+\omega\})$, for $1\leq i < |V|-\omega$,
showing the vertex set and the subset of the edges of the stream that are ``alive'' at each time
instant $i$, with the following two properties: (i) each drawing
$\Gamma_i$ is planar and (ii) the drawing of the common graphs
$G^{i}_\cap=G_i\cap G_{i+1}$ is the same in $\Gamma_i$ and in
$\Gamma_{i+1}$.
We call such a sequence of drawings an {\em \sdrawing}
(\SD).

The introduced problem, which we call \spproblem (\spp, for short),
captures the practical need of displaying evolving relationships on the same
set of entities.
As large changes in consecutive drawings might negatively affect the
ability of the user to effectively cope with the evolution of the
dataset to maintain his/her mental map, in this model only one edge is allowed to enter the
visualization and only one edge is allowed to exit the visualization
at each time instant, visible edges are represented by the same curve during their lifetime, and each vertex is represented by the same distinct point. Thus, the amount of relational information displayed at any time stays constant.
However, the magnitude of information to be simultaneously presented to the user
may significantly depend on the specific application as well as on the nature of
the input data. Hence, an interactive visualization system would
benefit from the possibility of selecting different time windows. On
the other hand, it seems generally reasonable to consider time windows whose
size is fixed during the whole animation.

To widen the application scenarios, we consider the
possibility of specifying portions of a streamed graph that are alive
during the whole animation. These could be, e.g.,
context-related substructures of the input graph, like the backbone
network of the Internet (where edges not in the backbone disappear due
to faults or congestion and are later replaced by new ones), or sets
of edges directly specified by the user.  We call this variant of the
problem {\spBproblem} (\spBp, for short) and the sought sequence
  of drawings an \sBdrawing (\SDB).

\smallskip
\noindent\textbf{Related Work.}
The problem is similar to on-line planarity testing~\cite{dt-opt-96},
where one is presented a stream of edge insertions and deletions and
has to answer queries whether the current graph is planar.  Brandes
{\em et al.}~\cite{cbdanddgppsz-dtsm-12} study the closely related
problem of computing planar straight-line grid drawings of trees whose
edges have a fixed lifetime under the assumption that the edges are
presented one at a time and according to an Eulerian tour of the tree.
The main difference, besides using topological rather than
straight-line drawings, is that in our model the sequence of edges
determining the streamed graph is known in advance and no assumption
is made on the nature of the stream.

It is worth noting that the \spp Problem can be conveniently
interpreted as a variant of the much studied {\sc Simultaneous
  Embedding with Fixed Edges (SEFE)} Problem (see~\cite{bkr-sepg-12}
for a recent survey).  In short, an instance of SEFE
consists of a sequence of graphs $G_1,\dots,G_k$, sharing some
vertices and edges, and the task is to find a sequence of planar
drawings $\Gamma_i$ of $G_i$ such that $\Gamma_i$ and $\Gamma_j$
coincide on $G_i \cap G_j$.
It is not hard to see that deciding whether a streamed graph is
$\omega$-stream planar is equivalent to deciding whether the graphs
induced by the edges of the stream that are simultaneously present at
each time instant admit a SEFE.  Unfortunately, positive results on
SEFE mostly concentrate on the variant with $k=2$, whose complexity is
still open, and the problem is NP-hard for $k \ge 3$~\cite{gjpss-sgefe-06}.  However, while
the SEFE problem allows the edge sets of the input graphs to
significantly differ from each other, in our model only small changes
in the subsets of the edges of the stream displayed at consecutive
time instants are permitted.  In this sense, the problems we study can
be seen as an attempt to overcome the hardness of SEFE for $k \ge 3$
to enable visualization of graph sequences consisting of 
several steps,
 when any two consecutive graphs exhibit a strong
similarity.

We note that the $\omega$-stream planarity of the stream
$e_1,\dots,e_m$ on vertex set $V$ and backbone edges $S$ is equivalent
to the existence of a drawing of the (multi)graph $p = (V,
\{e_1,\dots,e_m\} \cup S)$ such that (i) two edges cross only if
neither of them is in $S$ and (ii) if $e_i$ and $e_j$ cross, then
$|i-j| \ge \omega$.  As such the problem is easily seen to be a
special case of the {\sc Weak Realizability} Problem, which given a
graph $G=(V,E)$ and a symmetric relation $R \subseteq E \times E$ asks
whether there exists a topological drawing of $G$ such that no pair of
edges in $R$ crosses.  It follows that \spp and \spBp are contained in
$\mathcal NP$~\cite{sss-rsgnp-03}.  For $\omega=1$, the problem
amounts to finding a drawing of $un$, where a subset of the edges,
namely the edges of $S$, are not crossed.  This problem has recently
been studied under the name {\sc Partial
  Planarity}~\cite{abddgmpt-dnpgc-13,s-ppedgps-tr-13}.  Angelini et
al.~\cite{abddgmpt-dnpgc-13} mostly focus on straight-line drawings,
but they also note that the topological variant can be solved
efficiently if the non-crossing edges form a 2-connected graph.
Recently Schaefer~\cite{s-ppedgps-tr-13} gave an $O((nm)^3)$-time
testing algorithm for the general case of {\sc Partial Planarity} via
a Hanani-Tutte style approach.  He further suggests to view the
relation $R$ of an instance of {\sc Weak Realizability} as a conflict
graph on the edges of the input graph and to study the complexity
subject to structural constraints on this conflict graph.

\smallskip
\noindent\textbf{Our Contributions.}
In this work, we study the complexity of the \spp and \spBp Problems.
In particular, we show the following results.
\begin{compactenum}
\item \spBp is \NPC for all $\omega \geq 2$ when the backbone
  graph is a spanning tree.
\item There is a constant $\omega_0$ such that \spp with window size~$\omega_0$ is
\NPC.
\item We give an efficient algorithm with running time $O(n+\omega m)$
  for \spBp when the backbone graph consists of one 2-connected
  component plus, possibly, isolated vertices and no stream edge
  connects two isolated vertices.
\item We give an efficient algorithm for \spBp with running time $O(n+m)$ for
  $\omega=1$.
\end{compactenum}
It is worth pointing out that the second hardness result shows that
{\sc Weak Realizability} is \NPC even if the conflict graph
describing the non-crossing pairs of edges has bounded degree, i.e.,
every edge may not be crossed only by a constant number of other
edges.  In particular, this rules out the existence of FPT algorithms
with respect to the maximum degree of the conflict graph unless
$\mathcal P = \mathcal NP$.

For the positive results, note that the structural restrictions on the
variant for arbitrary values of $\omega$ are necessary to overcome the
two hardness results and are hence, in a sense, best possible.
Moreover, the algorithm for $\omega=1$ improves the previously best
algorithm for {\sc Partial Planarity} by
Schaefer~\cite{s-ppedgps-tr-13} (with running time $O((nm)^3)$-time)
to linear.  Again, since the problem is hard for all $\omega
\ge 2$, this result is tight.%


\section{Preliminaries}\label{se:preliminaries}
For standard terminology about graphs, drawings, and embeddings refer to~\cite{dett-gd-99}.

\remove{
A \emph{drawing} of a graph is a mapping of each vertex to a distinct
point of the plane and of each edge to a simple Jordan curve
connecting its endpoints. A drawing is \emph{planar} if the curves
representing its edges do not cross except, possibly, at common
endpoints. A graph is \emph{planar} if it admits a planar drawing. Two
drawings of the same graph are \emph{equivalent} if they determine the
same circular ordering of edges around each vertex. A \emph{planar
  embedding} is an equivalence class of planar drawings. A planar
drawing partitions the plane into topologically connected regions,
called \emph{faces}. 
}

	Given a $(k- 1)$-connected graph $G$ with $k\geq 1$, we denote by $k(G)$ the number of its maximal $k$-connected subgraphs. The maximal $2$-connected subgraphs are called {\em blocks}. Also, a $k$-connected component is {\em trivial} if it consists of a single vertex.
	Further, given a simply connected graph $G$, that is $1(G)=1$, the {\em block-cutvertex tree} $T$ of $G$ is the tree whose nodes are the cutvertices and the blocks of $G$, and whose edges connect nodes representing cutvertices with nodes representing the blocks they belong to. 	 


Contracting an edge $(u,v)$ in a graph $G$ is the operation of first removing $(u,v)$ from $G$, then identifying $u$ and $v$ to a new vertex $w$, and finally removing multi-edges.

Let $G$ be a planar graph and let $\mathcal{E}$ be a planar
embedding of $G$. Further, let $H$ be a subgraph of $G$.
We denote by $\mathcal{E}{|_H}$ the embedding of $H$
determined by $\mathcal{E}$.

Let \sefekinstance{k} be $k$ planar graphs on the same set $V$ of
vertices. A {\em simultaneous embedding with fixed edges} ({\em SEFE})
of graphs $\langle G(V,E_i)\rangle^k_{i=1}$ consists of $k$ planar
embeddings $\langle \mathcal{E}_i\rangle^k_{i=1}$ such that $\mathcal{E}_i|_{G_{ij}}=\mathcal{E}_j|_{G_{ij}}$, with $G_{ij}= (V, E_i \cap E_j)$ for $i \ne j$.  The {\sc SEFE} Problem
corresponds to the problem of deciding whether the $k$ input graphs
admit a {\em SEFE}. Further, if all graphs share the same set of edges ({\em sunflower intersection}), that is, the graph $G_\cap = (V, E_i \cap E_j)$ is the same for
every $i$ and $j$, with $1 \leq i < j \leq k$, the problem is called
{\sc Sunflower SEFE} and graph $G_\cap$ is the {\em common graph}.

In the following, we denote a streamed graph by a triple \instance{} such that $G(V,S)$ is a planar graph, called {\em backbone graph}, $E
\subseteq V^2 \setminus S$ is the set of edges of a stream $e_1,e_2,\dots,e_m$, and
$\Psi: E \leftrightarrow \{1,\dots,m\}$ is a bijective function that
encodes the ordering of the edges of the stream.

Given an instance $I=$~\instance{}, we call graph $G_\cup=(V,
S \cup E)$ the {\em union graph} of $I$. Observe that, if $G_\cup$ has
$k$ connected components, then $I$ can be efficiently decomposed into
$k$ 
independent smaller instances, whose Streamed Planarity can be tested independently. Hence, in the following we will
only consider streamed graphs with connected union graph.
Also, we denote by $\mathcal{Q}$ the set of isolated
vertices of $G$.

Note that, an obvious necessary condition for a streamed graph
\instance{} to admit an \SDB is the existence of a planar
combinatorial embedding $\mathcal{E}$ of the backbone graph $G$ such
that the endpoints of each edge of the stream lie on the boundary of
the same face of $\mathcal{E}$, as otherwise a crossing between an
edge of the stream and an edge of $G$ would occur.  However, since
each edge of the stream must be represented by the same
curve at each time, this condition is generally not
sufficient, unless $\omega=1$; see
Fig.~\ref{fig:ConsistentEdgeDrawing}.

\begin{figure}[tb]
  \centering
  \begin{subfigure}{.3\textwidth}
    \centering
    \includegraphics[page=1]{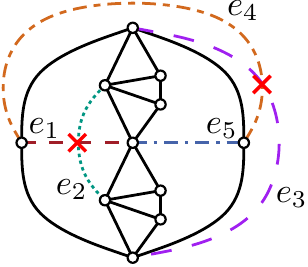}
    \caption{}
  \end{subfigure}
  \hspace{1cm}
  \begin{subfigure}{.3\textwidth}
    \centering
    \includegraphics[page=2]{img/NEG-PAR}
    \caption{}
  \end{subfigure}
\caption{Illustration of an instance \instance{} of \spBp with $\omega=2$, where $G$ is a $2$-connected graph, $E=\{e_i: 1\leq i \leq 5\}$, and $\Psi(e_i)=i$. Solid edges belong to $G$.  (a) and (b) show different embeddings of $G$ and assignments of the edges in $E$ to the faces of such embeddings. (a) determines a \SDBp{2} of \instance{}, while (b) does not.
}\label{fig:ConsistentEdgeDrawing}
\end{figure}

\section{Complexity}\label{se:npc}

In the following we study the computational complexity of testing
planarity of streamed graphs with and without a backbone graph.
First, we show that \spBp is \NPC, even when the backbone graph is a
spanning tree and $\omega = 2$.  This implies that \sunsefep is
\NPC for an arbitrary number of input graphs, even if every
graph contains at most $\xi = 2$ exclusive edges.
Second, we show that \spp is \NPC even for a constant window size
$\omega$.  This also has connections to the fundamental {\sc Weak Realizability}
Problem.
Namely,  Theorem~\ref{th:np-omega0} implies the \NPCN of {\sc Weak Realizability} even for instances
$\langle G(V,E), R \rangle$ such that the maximum number of
occurrences $\theta$ of each edge of $E$ in the pairs of edges
in $R$ is bounded by a constant, i.e., for each edge there
is only a constant number~$\theta$ of other edges it may not cross.

These results imply that, unless P=NP, no FPT algorithm with respect
to $\omega$, to $\xi$, or to $\theta$ exists for {\sc Streamed
  Planarity (with Backbone)}, \sefe, and {\sc Weak
  Realizability} Problems, respectively.

\begin{theorem}\label{th:nptree}
  \spBp is \NPC for $\omega \geq 2$, even when the backbone graph is a tree and the edges of
  the stream form a matching.
\end{theorem}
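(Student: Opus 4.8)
The plan is to prove NP-hardness by reduction from a known NP-complete problem, with membership in NP already following from the containment in {\sc Weak Realizability} noted earlier in the excerpt. The natural source problem is {\sc Sunflower SEFE} (or, equivalently, {\sc Simultaneous Embedding with Fixed Edges} for $k \ge 3$ graphs sharing a common graph), since the introduction explicitly connects $\omega$-stream planarity to SEFE and observes that SEFE is NP-hard for $k \ge 3$. The key conceptual hurdle to overcome is that in a streamed instance the ``lifetime'' structure is extremely rigid: each stream edge is alive for exactly $\omega$ consecutive time steps, so at any instant only $\omega$ stream edges coexist, whereas a SEFE instance with $k$ graphs may require many edges to be simultaneously constrained. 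With $\omega = 2$ only two stream edges are ever present together, so I cannot directly encode the pairwise consistency constraints of a $k$-graph SEFE instance; the backbone tree must do the heavy lifting of propagating embedding choices across time.

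First I would fix the source instance, most plausibly a variant of {\sc Planar Monotone 3-SAT} or the {\sc Betweenness}/{\sc SEFE}-style gadget constructions used in hardness proofs for SEFE with three graphs. The reduction should build, for a given formula, a backbone \emph{tree} $G=(V,S)$ together with an ordered stream $e_1,\dots,e_m$ forming a matching (so the stream edges are pairwise vertex-disjoint, as the statement demands). The core idea is to use the tree to create, for each variable, a rigid ``choice gadget'' whose two admissible planar embeddings (flips at certain cutvertices or rotations of certain subtrees) encode the truth value; because $G$ is only a tree it has enormous embedding freedom, and the stream edges must be routed to \emph{restrict} this freedom exactly to the allowed configurations. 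Each stream edge, alive for the two consecutive steps given by $\Psi$, forces its two endpoints to share a face in two consecutive graphs $G_i, G_{i+1}$, and — crucially — must be drawn by the \emph{same} curve at both steps. This shared-curve requirement across one time transition is the mechanism by which $\omega = 2$ transmits an embedding constraint from one local gadget to an adjacent one, effectively chaining the otherwise isolated pairwise constraints into a global one.

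The main obstacle, which I would attack most carefully, is designing the clause gadgets and the indexing function $\Psi$ so that a satisfying assignment corresponds exactly to a valid \SDB and vice versa. Concretely I would (i) describe the variable gadget and prove it has precisely two planar embeddings extendable to the relevant stream edges; (ii) describe a clause gadget that admits a consistent drawing iff at least one incident literal is set true, using the fact that $\omega = 2$ means each edge conflicts only with its time-neighbors; (iii) choose $\Psi$ so that the edges encoding a single variable receive consecutive indices, guaranteeing that the same-curve constraint links the occurrences of that variable across its clauses; and (iv) verify that no unintended crossing between a stream edge and a backbone edge, or between two stream edges with $|\Psi(e)-\Psi(e')| \ge 2$, can be forced or forbidden in a way that breaks the correspondence. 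The forward direction (assignment $\Rightarrow$ drawing) is routine once the gadgets are fixed; the reverse direction (drawing $\Rightarrow$ assignment) is where I expect the real work, since I must argue that \emph{every} valid \SDB factors through the intended discrete choices and cannot ``cheat'' using the tree's extra embedding freedom.

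Finally I would lift the $\omega = 2$ construction to all $\omega \ge 2$, which should be a simple padding argument: I would pre- and post-pad the stream with dummy edges (and, if needed, dummy isolated endpoints joined to the backbone tree) so that at every relevant time step the window of size $\omega$ contains the same two ``active'' stream edges as in the $\omega=2$ case plus $\omega-2$ inert edges whose endpoints can always be placed in a free face without interaction. Keeping the backbone a tree and the active stream edges a matching throughout the padding preserves the two structural guarantees in the statement. Since the whole construction is clearly polynomial in the formula size and \spBp lies in NP, this establishes \NPCN for every $\omega \ge 2$.
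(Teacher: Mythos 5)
Your proposal does not contain a proof of the hard direction; it is a plan whose core is exactly the part you leave open. You correctly identify \sunsefep as the natural source problem, but then back away from it and instead propose to build variable and clause gadgets for a reduction from a SAT-type problem, explicitly deferring both the gadget constructions and the ``drawing $\Rightarrow$ assignment'' argument. For this theorem, that deferred part \emph{is} the theorem: designing tree-plus-matching gadgets whose embedding freedom encodes truth values and clauses amounts to re-proving, from scratch, the \NPHN of \sunsefep with $k=3$ graphs when the common graph is a tree and the exclusive edges connect only leaves --- a nontrivial published result~\cite{adn-osnsp-14} that the paper's proof of Theorem~\ref{th:nptree} invokes as a black box. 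The paper never touches SAT: it reduces directly from that SEFE variant, and its only gadgetry is (i) a star of $|E_i|-1$ fresh leaves glued to each endpoint of each exclusive edge, so that every pairwise constraint can be tested on fresh vertices while the stream remains a matching, and (ii) sentinel leaves with sentinel stream edges that split the stream into three substreams, one per $E_i$.

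Moreover, the premise on which you abandon the direct reduction --- that with $\omega=2$ ``only two stream edges are ever present together,'' so pairwise SEFE constraints cannot be encoded directly and must instead be chained across time via the shared-curve requirement --- is mistaken, and this misconception is what steers you toward the much harder route. Since the backbone lies in every common graph $G^i_\cap$, its drawing is a single global object fixed over the whole stream; for a tree backbone and a matching stream, an \SDB with $\omega=2$ exists if and only if there is \emph{one} embedding of the tree in which no two consecutive stream edges have alternating endpoints in the circular order given by an Eulerian tour. A SEFE over a common tree fails exactly when some \emph{pair} of exclusive edges of the same $E_i$ alternates, so the global SEFE condition decomposes into $O(\sum_i |E_i|^2)$ pairwise non-alternation constraints; the paper enumerates all pairs $\langle l,m\rangle$ of edges of $E_i$ and, for each pair, emits two \emph{consecutive} stream edges $lm$ and $ml$ attached to fresh star leaves, which alternate if and only if $l$ and $m$ do (here it is essential that exclusive edges attach only to leaves of $T$, so the leaf copies of an endpoint occupy a contiguous block of the tour). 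No propagation across time is needed. Your padding idea for general $\omega \geq 2$ does agree with the paper's (additional sentinel leaves and sentinel edges), but without the core reduction there is nothing to pad, so the proposal as written leaves the \NPHN unproven.
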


\begin{proof}
	The membership in \NP follows from~\cite{sss-rsgnp-03}.
	The \NPHN is proved by means of a polynomial-time reduction from
	problem \sunsefep, which has been proved \NPC for $k=3$ graphs, even
	when the common graph is a tree $T$ and the exclusive edges of each
	graph only connect leaves of the tree~\cite{adn-osnsp-14}. 
	
	Given an
	instance \sefekinstance{3} of \sunsefep, we construct a streamed
	graph \instance{} that admits an \SDB for $\omega=2$ if and only if
	\sefekinstance{3} is a positive instance of \sunsefep, as follows.
	To simplify the construction, we first replace instance
	\sefekinstance{3} of \sunsefep with an equivalent instance in which
	the exclusive edges in $E_1 \cup E_2 \cup E_3$ form a matching, by
	applying the technique described in~\cite{adfpr-bicosefe-14}.  Then,
	we perform the reduction starting from such a new instance. Refer to
	Fig.~\ref{fig:3Page}.
	
	\begin{figure}[tb]
		\centering \centering
		\begin{subfigure}{.39\textwidth}
			\includegraphics[page=1, width=\textwidth]{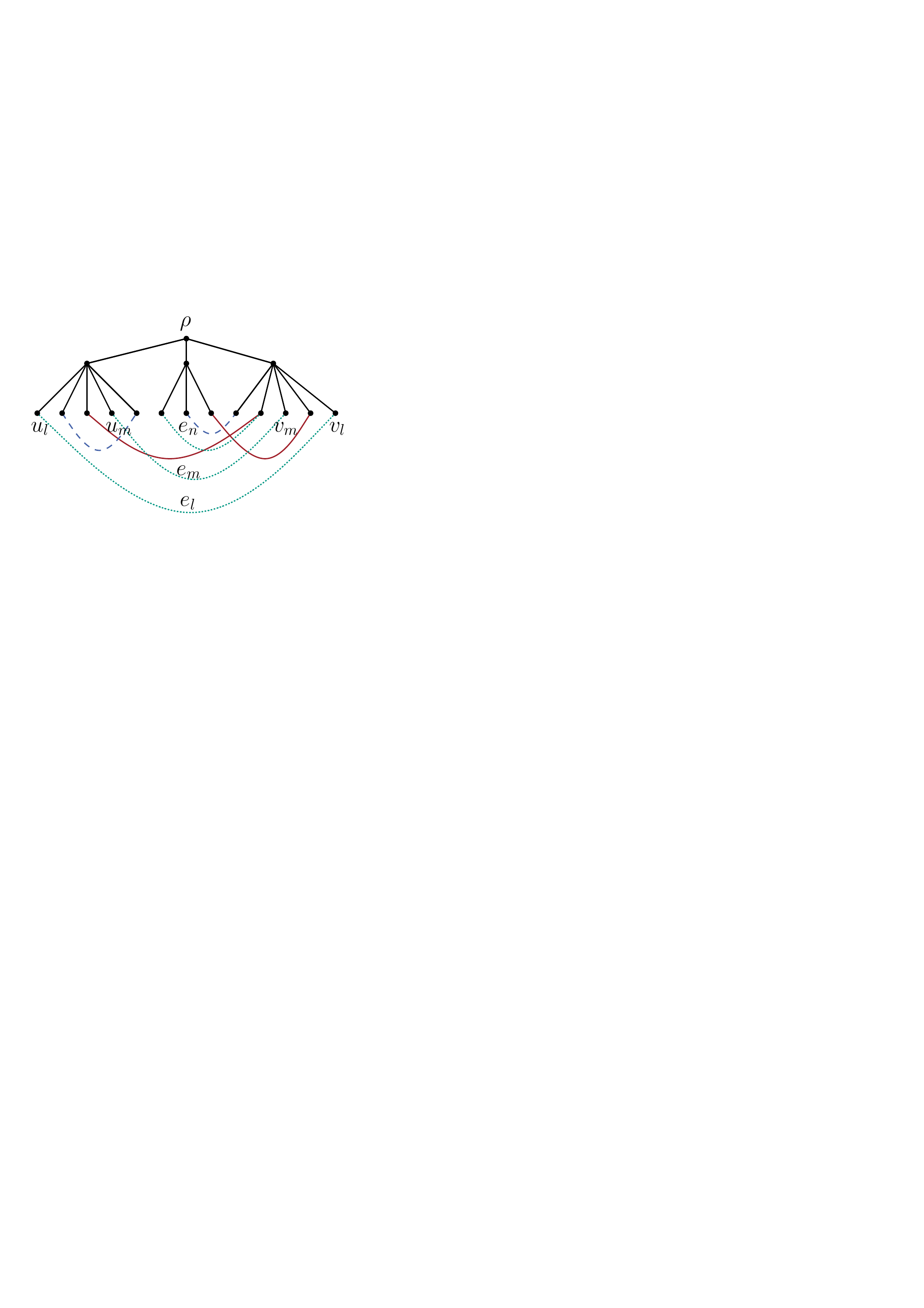}
			\caption{}
		\end{subfigure}
		\hspace{1cm}
		\begin{subfigure}{.51\textwidth}
			\includegraphics[page=2, width=\textwidth]{img/3pageTEMPLATE}
			\caption{}
		\end{subfigure}
		\caption{Illustration for the proof of
			Theorem~\ref{th:nptree}. (a) Instance \sefekinstance{3}. (b)
			Partial representation of instance \instance{} containing the
			edges of $G$ and the edges of the stream constructed starting
			from pairs of edges of \Eg{}. Edges of $T$ and $G$ are black,
			edges of \Gr{}, \Gb{}, and \Gg are solid red, dashed blue, and
			dotted green, respectively.  }\label{fig:3Page}
	\end{figure}

	First, set $G=T$. Then, for $i=1,2,3$ and for each edge $e = (u,v)
	\in E_i$, add to $G$ a star graph\footnote{A star graph is a tree
		with one internal node, called the {\em central vertex} of the
		star, and $k$ leaves.} $S(u_e)$ with leaves $u^1_e,\dots,u^q_e$
	and a star graph $S(v_e)$ with leaves $v^1_e,\dots,v^q_e$ with
	$q=|E_i|-1$, and identify the center of $S(u_e)$ with $u$ and the
	center of $S(v_e)$ with $v$, respectively. Also, consider the vertex
	$\rho$ of $G$ corresponding to any internal node of $T$, add to $G$
	vertices $s_{i}$, for $i=1,\dots,6$ (\emph{sentinel leaves}), and
	connect each of such vertices to $\rho$. Observe that, by
	construction, $G$ is a tree and $T\subset G$.  The sentinel edges
	will serve as endpoints of edges of the stream, called
	\emph{sentinel edges}, used to split the stream in three substreams
	in such a way that no edge of one substream is alive together with
	an edge of a different substream.
	
	Further, set $E$ can be constructed as follows. For $i=1,2,3$ and
	for each pair $\langle l, m\rangle$ of edges in $E_i$, add to $E$ an
	edge $lm=(u^a_l,v^a_l)$ between a leaf of $S(u_l)$ and a leaf of
	$S(v_l)$ and an edge $ml=(u^b_m,v^b_m)$ between a leaf of $S(u_m)$
	and a leaf of $S(v_m)$, respectively, for some $a,b \in
	{1,2,\dots,|E_i|-1}$, in such a way that no two edges in $E$ are
	incident to the same leaf of $G$.  Observe that, by construction,
	$E$ is a matching.  Also, add to $E$ edges $(s_1,s_2)$, $(s_3,s_4)$,
	and $(s_5,s_6)$ (\emph{sentinel edges}).
	
	Function $\Psi$ can be defined as follows. First, we construct an
	auxiliary ordering $\sigma=e_h,\dots,e_g$ of the edges in $E$, then
	we just set $\Psi(e)=\sigma(e)$, for any edge $e \in E$, where
	$\sigma(e)$ denotes the position of $e$ in $\sigma$.  To obtain
	$\sigma$, we consider sets $E_1$, $E_2$, and $E_3$ in this order and
	perform the following two steps. {\sc STEP 1:} for each pair
	$\langle l,m\rangle$ of edges in $E_i$, add to $\sigma$ edge $lm$
	and edge $ml$.  {\sc STEP 2:} add to $\sigma$ the sentinel edge
	$(v_{2(i-1)+1},u_{2(i-1)+2})$. Observe that, by construction, each
	common graph $G^i_\cap$ contains the edges of $G$ plus at most two
	edges $lm$ and $ml$ of the stream with $l,m \in E_i$, for some $i
	\in \{1,2,3\}$.
	
	Observe that, the reduction can be easily performed in polynomial
	time. 
	
	We now shot that \sefekinstance{3} admits a SEFE if and only if instance \instance{} admits an \SDB for $\omega=2$.
	
	Suppose that \sefekinstance{3} admits a SEFE
	\simplesefeksolution{3}. Let $\mathcal{H}$ be the embedding of the
	common graph $T$ in \simplesefeksolution{3}, that is,
	${\mathcal{H}}=\mathcal{E}_1|_T=\mathcal{E}_2|_T=\mathcal{E}_3|_T$. We
	construct a planar embedding $\mathcal{E}$ of $G$ by defining the
	rotation scheme of each non-leaf vertex of $G$, as follows.
	
	If $v$ is not a leaf of $T$, then the rotation scheme of $v$ in
	$\mathcal{E}$ is equal to the rotation scheme of $v$ in
	$\mathcal{H}$. If $v=u_l$ ($v=v_l$) is the unique neighbor of of any
	leaf vertex of $G$, then the rotation scheme of $u_l$ ($v_l$) can be
	chosen in such a way that the ordering of the leaves of $G$ that are
	adjacent to $u_l$ ($v_l$) is the reverse of the ordering of the
	leaves of $G$ that are adjacent to $v_l$ ($u_l$), where the the
	leaves of $G$ that are adjacent to $u_l$ ($v_l$) and to $v_l$
	($u_l$) are identified by the corresponding apex.
	We claim that the constructed embedding $\mathcal{E}$ of $G$ yields
	an \SDB of \instance{} for $\omega=2$.  Let $\mathcal{O}$ be the
	circular ordering of the leaves of $T$ determined by an Eulerian
	tour of $T$ in $\mathcal{H}$. Also, let $\mathcal{O}'$ be the
	circular ordering of the leaves of $G$ determined by an Eulerian
	tour of $G$ in $\mathcal{E}$.  Suppose that there exist two edges
	$xy$ and $yx$ with $|\Psi(xy)-\Psi(yx)|<\omega=2$ such that the
	endpoints $u^i_x$ and $v^i_x$ of edge $xy$ and the endpoints $u^j_y$
	and $v^j_y$ of edge $yx$ alternate in $\mathcal{O}'$. This implies
	that the unique neighbors $u_x$ of $u^i_x$, $v_x$ of $v^i_x$, $u_y$
	of $u^j_y$, and $v_y$ of $v^j_y$ in $T$ alternate in
	$\mathcal{O}$. This, in turn, implies a crossing between the two
	edges $x$ and $y$ of some set $E_i$. Hence, contradicting the fact
	that \simplesefeksolution{3} is a SEFE.
	
	Suppose that \instance{} admits an \SDB for $\omega=2$. Let
	$\mathcal{E}$ be the planar embedding of $G$ in any \SDB of
	\instance{}. Let $\mathcal{O}$ be the ordering of the leaves of $G$
	in an Eulerian tour of $G$ in $\mathcal{E}$. Also, let
	$\mathcal{O}'$ of the ordering of the leaves of $T$ in an Eulerian
	tour of $T$ in the embedding $H=\mathcal{E}|_{T}$. We claim that $H$
	yields a SEFE of \sefekinstance{3}.  Suppose that there exist two
	edges $x=(u_x,v_x)$ and $y=(u_y,v_y)$ of some set $E_i$ whose
	endpoints alternate in $\mathcal{O}'$. Consider the two edges
	$xy=(u^p_x,v^p_x)$ and $yx=(u^q_y,v^q_y)$ of $E$, with $1\leq p \leq
	|E^*_i|-1$ and $1 \leq q \leq |E^*_i|-1$.  Since the sets of leaves
	of $S(u_x)$, $S(v_x)$, $S(u_y)$, and $S(v_y)$ appear in
	$\mathcal{O}$ in the same order as the vertices $u_x$, $v_x$, $u_y$,
	and $v_y$ appear in $\mathcal{O}'$, the endpoints of $xy$ and $yx$
	alternate in $\mathcal{O}'$. Further, by construction, it holds that
	either $\Psi(xy)=\Psi(yx)+1$ or $\Psi(yx)=\Psi(xy)+1$, that is,
	either edge $xy$ immediately precedes edge $yx$ in the stream or
	edge $yx$ immediately precedes edge $xy$ in the stream. The above
	facts then imply a crossing between edge $xy$ and $yx$ of the
	stream. Hence, contradicting the hypothesis that \instance{} admits
	an \SDB for $\omega=2$. 
	
	The above discussion proves the statement for $\omega = 2$. To extend the theorem to any value of $\omega \geq 2$ it suffices to augment \instance{} with additional sentinel leaves and sentinel edges. This concludes the proof of the theorem.
\end{proof}

\begin{theorem}\label{th:np-omega0}
\label{th:np-omega0}
 There is a constant $\omega_0$ such that deciding whether a given
  streamed graph is $\omega_0$-stream planar is \NPC.
\end{theorem}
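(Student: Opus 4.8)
The plan is to prove membership in \NP exactly as in Theorem~\ref{th:nptree}, via the containment of \spp in \NP through \emph{Weak Realizability}~\cite{sss-rsgnp-03}. The bulk of the work is the hardness reduction. My starting point is the observation that, in an instance of \spp, a bounded window size $\omega_0$ forbids precisely those crossings between two stream edges $e_i$ and $e_j$ with $|i-j| < \omega_0$; thus the forbidden crossings form exactly the \emph{bandwidth-$\omega_0$} conflict graph of the ordering $\Psi$, in which every edge conflicts with at most $2(\omega_0-1)$ others. Proving hardness therefore amounts to encoding an \NP-hard instance as a \emph{bounded-bandwidth} stream, so that every genuine non-crossing requirement falls inside a window of constant size while all remaining pairs stay harmless. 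I would reduce from \sunsefep with a tree common graph, the same source used in Theorem~\ref{th:nptree}, where the task is to fix a single cyclic order of the leaves that simultaneously avoids crossings within each of the three exclusive edge sets.

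The central difficulty is that, unlike in Theorem~\ref{th:nptree}, here there is no backbone graph to keep the common tree, and hence the global cyclic order of its leaves, fixed throughout the animation: every stream edge is alive for only $\omega_0$ steps, so no edge can persist globally, and a constant window can a priori enforce only \emph{local} consistency. To overcome this I would build a \emph{streamed rigid frame} that simulates the persistent backbone. Concretely, I would replace the backbone tree by a long chain of small, $3$-connected (hence rotationally rigid) gadgets laid out along the stream order, presenting only $O(1)$ consecutive gadgets in each window and letting consecutive gadgets share enough structure that their orientations are forced to agree in the overlap. By transitivity this would propagate a single global planar embedding (up to one overall reflection) of the whole frame, even though no frame edge is ever present for more than $\omega_0$ steps; the leaves of the common tree are attached to this frame so that their cyclic order is globally determined.

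With the frame in place, I would encode each simultaneity constraint of the \sunsefep instance, namely that two exclusive edges of the same set must not cross, as a pair of stream edges placed within distance $\omega_0$ in $\Psi$, mirroring the $lm$/$ml$ trick of Theorem~\ref{th:nptree}, so that the window-induced prohibition coincides exactly with the desired non-crossing. Constraints relating leaves far apart along the frame would be realized by routing auxiliary ``wire'' edges parallel to the frame, so that each genuine conflict is tested between edges close in $\Psi$; this keeps the bandwidth, and thus both $\omega_0$ and the conflict degree $\theta = O(\omega_0)$, constant. The equivalence would then follow the two directions of Theorem~\ref{th:nptree}: a \sefe yields a consistent global frame embedding and a valid routing of the wires, hence an $\omega_0$-stream planar drawing, and conversely any $\omega_0$-stream planar drawing induces a globally consistent leaf order that solves the \sunsefep instance.

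I expect the rigidity-propagation step to be the main obstacle. One must argue that the constant-window frame genuinely forces a globally consistent embedding, in particular that the drawing cannot exploit the disappearance of expired frame edges to silently re-embed a distant part of the structure and thereby dodge a constraint, and that routing long-range constraints as local wires faithfully preserves the crossing semantics of the original instance. Making the frame gadgets rigid under a sliding window while keeping the construction of polynomial size and the window a fixed constant is the delicate part; the remaining bookkeeping (defining $\Psi$, verifying the bounded-degree property, and checking planarity of each $G_i$) is routine and parallels the proof of Theorem~\ref{th:nptree}.
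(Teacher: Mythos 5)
Your proposal correctly identifies the crux of the problem --- without a backbone, nothing in the model keeps a structure ``alive'' beyond $\omega_0$ steps --- but the mechanism you propose to solve it does not work, and it is precisely the part you leave unresolved. Your frame is a chain of $3$-connected gadgets \emph{laid out along the stream order}, with only $O(1)$ consecutive gadgets visible at a time, and you argue that overlap plus ``transitivity'' propagates a single global embedding. The flaw is that consistency between drawings is only ever required for edges that are \emph{simultaneously alive}: once a frame gadget's edges have left the window, they impose no constraint whatsoever on later drawings. In particular, a leaf (or any vertex) whose incident edges have all expired is topologically free --- later edges may be routed around its fixed point arbitrarily --- so its placement relative to the still-alive part of the frame is no longer enforced. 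This is exactly the failure mode the paper isolates in Fig.~\ref{fig:edge-traversal}: renewing a cycle by parallel edges does not keep a vertex inside it. Consequently, there is no ``global planar embedding of the whole frame'' on which your $lm$/$ml$ tests can be read off; a test performed at time $t_2$ says nothing about the leaf order that was fixed at an earlier time $t_1$ once the intervening frame has dissolved, and the same objection applies to your ``wire'' edges, whose routing is only forced while suitable enclosing structure is alive at both ends. You flag this as ``the main obstacle,'' but it is not a technicality to be checked --- it is the entire content of the theorem, and as stated your construction would be defeated by re-embedding expired parts.

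The paper's proof supplies exactly the missing device: a \emph{persistent edge}, a small gadget that is \emph{refreshed at regular intervals throughout the entire stream}, where each renewal overlaps the previous generation of edges in such a way that the new edges have a \emph{unique} planar extension of the current drawing (Fig.~\ref{fig:persistent-edge}). Persistent edges are kept at fixed ``locations'' for the whole duration, so faces bounded by them genuinely persist and enclosed vertices can never escape. On top of this the paper does not reduce from \sunsefep at all, but directly from 3-SAT: a cage of persistent edges provides two cells (satisfied/unsatisfied literals), a clause gadget forces at least one literal vertex of each clause into the satisfied cell, and a final checking phase presents $O(1)$ edges per test (an edge for ``same cell,'' a $K_{2,3}$ through the cage's shared vertices $a,b,c$ for ``different cells''), with persistent edges refreshed between tests. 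If you want to salvage your outline, the persistent-edge idea (structure renewed in place with forced re-embedding, alive at all times) is the ingredient you must build first; a frame that marches along the stream and dies behind you cannot replace the backbone.
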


\begin{proof} The membership in \NP follows from~\cite{sss-rsgnp-03}.  In the
	following we describe a reduction that, given a 3-SAT formula
	$\varphi$, produces a streamed graph that is $\omega_0$-stream
	planar if and only if $\varphi$ is satisfiable.
	
	To make things simple, we do not describe the stream, but rather
	important keyframes.  Our construction has the property that edges
	have a FIFO behavior, i.e., if edge $e$ appears before edge $f$,
	then also $e$ disappears before $f$.  This, together with the fact
	that in each key frame only $O(1)$ edges are visible ensures that
	the construction can indeed be encoded as a stream with window
	size $O(1)$.  The value $\omega_0$ we use is simply the maximum
	number of visible edges in any of the key frames.  We do not take
	steps to further minimize $\omega_0$, but even without this, the
	value produced by the reduction is certainly less than 120, as we
	estimate at the end of the proof.  Sometimes, we wish to wait
	until a certain set of edges has disappeared.  In this case we
	insert sufficiently many isolated edges into the stream, which
	does not change the $\omega_0$-planarity of the stream.
	
	We now sketch the construction.  It consists of two main pieces.
	The first is a cage providing two faces called \emph{cells}, one
	for vertices representing satisfied literals and one for vertices
	representing unsatisfied literals.  We then present a clause
	stream for each clause of $\varphi$.  It contains one literal
	vertex for each literal occurring in the clause and it ensures
	that these literal vertices are distributed to the two cells of
	the cage such that at least one goes in the cell for satisfied
	literals.  Throughout we ensure that none of the previously
	distributed vertices leaves the respective cell.
	
	Second, we present a sequence of edges that is $\omega_0$-stream
	planar if and only if the previously chosen distribution of the
	literal vertices forms a truth assignment.  This is the case if
	and only if any two vertices representing the same literal are in
	the same cell and any two vertices representing complementary
	literals of one variable are in distinct cells.
	
	It is clear that, if the constructions work as described, then the
	resulting streamed graph is $\omega_0$-stream planar if and only
	if $\varphi$ is satisfiable.  The first part of the stream ensure
	that from each clause one of the literals must be assigned to the
	cell containing satisfied literals (i.e. the literal receives the
	value true).  The second part ensures that these choices are
	consistent over all literals, i.e., these choices actually
	correspond to a truth assignment of the variables.
	
	\begin{figure}[tb] \centering
		\includegraphics{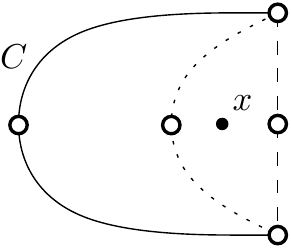}
		\caption{Cycle $C$ (solid and dashed edges) contains vertex $x$
			in its interior.  The dashed edges leave the sliding window
			soon.  Presenting a new path (dotted) parallel to the old path
			does not ensure that $x$ ends up in the interior of the
			resulting cycle $C'$ (solid and dotted edges).}
		\label{fig:edge-traversal}
	\end{figure}
	
	Our first step will be the construction of the cage containing the
	two cells.  Since the cage needs to persist throughout the whole
	sequence, it must be constructed in such a way that it can be
	``kept alive'' over time by presenting new edges.  Note that it
	does not suffice to repeatedly present edges that are parallel to
	existing ones, as they may be embedded differently, and hence over
	time allow isolated vertices to move through obstacles; see
	Fig.~\ref{fig:edge-traversal}.  We first present a construction
	that behaves like an edge that can be ``renewed'' without changing
	its drawing too much.  We call it \emph{persistent edge}.
	
	\begin{figure}[tb] \centering
		\begin{subfigure}{.32\textwidth} \centering
			\includegraphics[page=1]{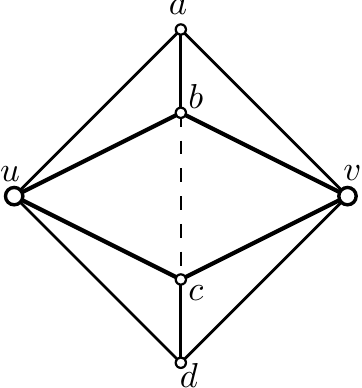}
			\caption{}
			\label{fig:persistent-edge-1}
		\end{subfigure} \hfill
		\begin{subfigure}{.32\textwidth} \centering
			\includegraphics[page=3]{fig/persistent-edge}
			\caption{}
			\label{fig:persistent-edge-2}
		\end{subfigure} \hfill
		\begin{subfigure}{.32\textwidth} \centering
			\includegraphics[page=4]{fig/persistent-edge}
			\caption{}
			\label{fig:persistent-edge-3}
		\end{subfigure}
		\caption{A persistent edge.  The thickness of the edges
			indicates how long the edge stays in the sliding window.  The
			thinner the edge the earlier it leaves the window.  (a) The
			initial configuration; the dashed edge $bc$ dissolves first.
			It is used only once to initially enforce a unique planar
			embedding.  (b) New vertices $b'$ and $c'$ with neighbors
			$u,b,v$ and $u,c,v$, respectively, are introduced.  Starting
			from the embedding in (a) the embedding is uniquely defined.
			(c) After the edges incident to $a$ and $d$ disappear, the
			drawing has again the same structure as in (a).  Repeating
			this cycle hence preserves the edge.  Since edges are embedded
			only in the interior of the gadget vertices that are embedded
			outside the persistent edge cannot traverse it.}
		\label{fig:persistent-edge}
	\end{figure}
	
	Let $u$ and $v$ be two vertices.  A persistent edge between $u$
	and $v$ consists of the four vertices $a,b,c,d$, each lying on a
	path of length~2 from $u$ to $v$.  Additionally, $a$ is connected
	to $b$ and $b$ is connected to $c$.  Initially, we also have
	insert the edge $b,c$ to enforce a unique planar embedding.
	However, once it leaves the sliding window it does not get
	replaced.  Figure~\ref{fig:persistent-edge-1} shows a persistent
	edge where the thickness of the edge visualizes the time until an
	edge leaves the sliding window.  The thicker the edge the longer
	it stays.  Once the edge $bc$ has been removed, but before any of
	the other edges disappear, we present in the stream the edges
	$ub'$, $vb'$ and $bb'$ as well as $uc'$, $vc'$ and $cc'$, where
	$b'$ and $c'$ are new vertices; see
	Fig.~\ref{fig:persistent-edge-2}.  Note that there is a unique way
	to embed them into the given drawing.  After the edges $ua$, $av$
	leave the sliding window, $b$ takes over the role of $a$ and $b'$
	takes over the role of $b$.  Similarly, after the edges $ud$ and
	$dv$ leave the sliding window, $c$ takes over the role of $d$ and
	$c'$ takes over the role of $c$; see
	Fig.~\ref{fig:persistent-edge-3}.  By presenting six new edges in
	regular intervals, the persistent edge essentially keeps its
	structure.  In particular, we know at any point in time which
	vertices are incident to the inner and outer face.  For simplicity
	we will not describe in detail when to perform this book keeping.
	Rather, we just assume that the sliding window is sufficiently
	large to allow regular book keeping.  For example, before each of
	the steps described later, we might first update all persistent
	edges, then present the gadget performing one of the steps, then
	update the persistent edges gain, and finally wait for the gadget
	edges to be removed from the sliding window again.
	
	\begin{figure}[tb] \centering
		\begin{subfigure}[b]{.3\textwidth} \centering
			\includegraphics[page=1]{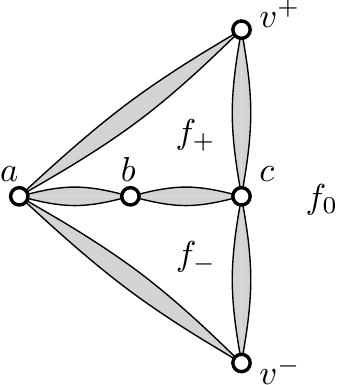}
			\caption{}
			\label{fig:cage}
		\end{subfigure}
		\begin{subfigure}[b]{.3\textwidth} \centering
			\includegraphics[page=3]{fig/cage}
			\caption{}
			\label{fig:cage-check-same}
		\end{subfigure}
		\begin{subfigure}[b]{.3\textwidth} \centering
			\includegraphics[page=2]{fig/cage}
			\caption{}
			\label{fig:cage-check-different}
		\end{subfigure}
		
		\caption{(a) The cage, the thick gray edges are persistent edges
			and are refreshed at regular intervals.  After presenting all
			clause sequences, the faces $f^+$ and $f^-$ will contain the
			literal vertices corresponding to satisfied and unsatisfied
			literal vertices, respectively. (b) Edges used to check
			whether two literal vertices $x_i$ and $x_j$ are in the same
			face. (c) Edges used to check whether literal vertices
			$\overline{x_i}$ and $x_j$ are in distinct faces.}
		\label{fig:cage-and-check}
	\end{figure}
	
	Next, we describe the cage.  Conceptually, it consists of two
	cycles of length~4, on vertices $a,b,c,v^+$ and $a,b,c,v^-$,
	respectively.  However, the edges are actually persistent edges;
	see Fig.~\ref{fig:cage}.  The interior faces $f^+$ and $f^-$ of
	the two cycles are the positive and negative literal faces,
	respectively.  Note that at any point in time only a constant
	number of edges are necessary for the cage.
	
	Before we describe the clause gadget, which is the most involved
	part of the construction, we briefly show how to perform the test
	for the end of sequence.  Namely, assume that we have a set $V'
	\subseteq V$ of literal vertices, and each of them is contained in
	one of the two literal faces. More formally, for each clause $c_i
	\in \varphi$ and for each Boolean variable $x$, set $V'$ contains
	a literal vertex $x_i$, if $x \in c_i$, or a literal vertex
	$\overline{x_i}$, if $\overline{x} \in c_i$. To check whether two
	literal vertices $x_i$ and $x_j$ corresponding to a variable $x$
	are in the same face, it suffices to present an edge between them
	in the stream, then wait until that edge leaves the sliding
	window, and continue with the next pair; see
	Fig~\ref{fig:cage-check-same}.  Of course, in the meantime we may
	have to refresh the persistent edges.  Similarly, if we wish to
	check that literal vertices $\overline{x_i}$ and $x_j$ are in
	distinct faces, we make use of the fact that the two cycles
	forming the cage share two edges, and hence three vertices $a,b$
	and $c$.  We present in the stream the complete bipartite graph on
	the vertices $\{\overline{x_i},x_j\}$ and $\{a,b,c\}$.  Clearly,
	this can be drawn in a planar way if and only if $\overline{x_i}$
	and $x_j$ are in distinct faces; see
	Fig.~\ref{fig:cage-check-different}.  Again, it may be necessary
	to wait until these edges leave the sliding window before the next
	test can be performed.

	\begin{figure}[tb] \centering
		\begin{subfigure}[b]{.3\textwidth} \centering
			\includegraphics[page=2, scale=.85]{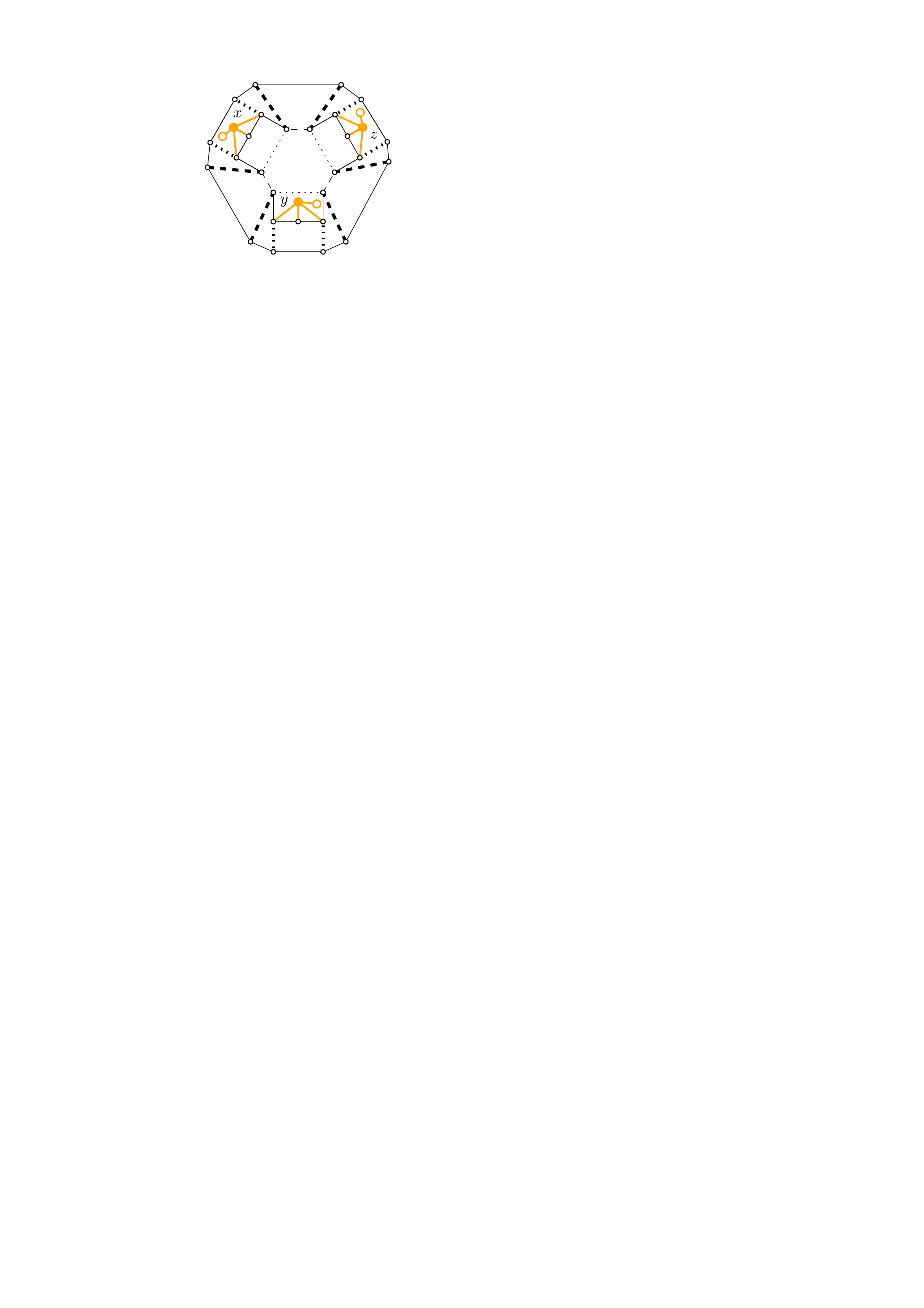}
			\caption{}
			\label{fig:clause-1}
		\end{subfigure}\hfill
		\begin{subfigure}[b]{.3\textwidth} \centering
			\includegraphics[page=4, scale=.85]{fig/clause}
			\caption{}
			\label{fig:clause-2}
		\end{subfigure}\hfill
		\begin{subfigure}[b]{.3\textwidth} \centering
			\includegraphics[page=6, scale=.85]{fig/clause}
			\caption{}
			\label{fig:clause-3}
		\end{subfigure}
		
		\begin{subfigure}[b]{.6\textwidth} \centering
			\includegraphics[page=7, scale=.85]{fig/clause}
			\caption{}
			\label{fig:clause-4}
		\end{subfigure}
		\begin{subfigure}[b]{.3\textwidth} \centering
			\includegraphics[page=8, scale=.85]{fig/clause}
			\caption{}
			\label{fig:clause-5}
		\end{subfigure}

		\caption{Illustration of the clause sequence. (a) Initial
			embedding of the clause.  (b), (c) faces indicator vertices
			can reach if they are embedded in the face close to the center
			and close to the boundary, respectively. (d), separating the
			vertices corresponding to satisfied and unsatisfied literals
			into two distinct faces. (e) Integrating the now separated
			literal vertices into the corresponding faces of the cage.}
		\label{fig:clause}
	\end{figure}
	
	Finally, we describe our clause gadget; see Fig.~\ref{fig:clause}
	for an illustration.  First, we present the clause gadget as it is
	shown in Fig.~\ref{fig:clause-1}.  The literal vertices are large
	and solid, their corresponding indicator vertices are represented
	by large empty disks.  The edges are ordered in the stream such
	that the three edges connecting a literal vertex to its indicator
	are presented first, i.e., they also leave the sliding window
	first.  The remaining three edges incident to the literals are
	drawn last so that they remain present longest.  Observe that the
	embedding of the clause without the literal and indicator vertices
	is unique; we call this part of the clause the \emph{frame}.  Each
	literal vertex may choose among two possible faces of the frame
	where it can be embedded.  Either close to the center or close to
	the boundary.  The faces in the center are shaded light gray, the
	faces on the boundary are shaded or tiled in a darker gray in
	Fig.~\ref{fig:clause-1}.
	
	We now first wait until the edges between literal vertices and
	their indicators leave the sliding window.  Now the following
	things happen.  First, the thin dotted and dashed edges leave the
	sliding window.  Immediately afterwards, we present in the stream
	paths of length~2 that replace these edges, so the frame
	essentially remains as it is shown.  However, after this step, the
	indicator vertex of any literal that was embedded in the face
	close to the center may be in any of the faces shaded in light
	gray in Fig.~\ref{fig:clause-2}.  Now, first the thick dotted
	edges leave the sliding window and are immediately replaced by
	parallel paths.  Afterwards, the thick dashed edges leave the
	sliding window and are immediately replaced by parallel paths.
	Again, the frame remains essentially present.  This allows the
	indicator vertices of literals that were embedded on the outer
	face to traverse into the faces indicated in
	Fig.~\ref{fig:clause-3}.  Note that, if all literal vertices were
	embedded in the face close the boundary, then there is no face of
	the frame that can simultaneously contain them at this point.  If
	however, at least one of them was embedded in the face close to
	the center, then there is at least one face of the frame that can
	contain all the vertices simultaneously.  We now include in the
	stream a triangle on the three indicator vertices.  This triangle
	can be drawn without crossing edges of the frame if and only if
	the three vertices can meet in one face, which is the case if and
	only if at least one indicator vertex, and hence also its
	corresponding literal vertex, was embedded close to the center.
	Now we wait until the edges of the clause, except for those
	incident to the literal vertices and the paths that were renewed
	have vanished; see Fig.~\ref{fig:clause-4}.
	
	Let now $p$ be a new vertex, and denote the neighbors of the
	literal vertex $x$ by $\alpha_x,\beta_x$ and $\gamma_x$, and
	similarly for $y$ and $z$.  We now connect $v$ to the cage by
	present the edges $v^-v$ and $v^+v$ as well as edges forming a
	path from $c$ to $p$ that, starting from $p$, first visits
	$\alpha_x,\beta_x,\gamma_x$, then $\alpha_y,\beta_y,\gamma_y$, and
	finally $\alpha_z,\beta_z,\gamma_z$.  Observe that the fact that
	$p$ has disjoint paths to $v^-,v+$ and $v$ containing the
	$\alpha_h,\beta_h$ and $\gamma_h$, with $h \in \{x,y,z\}$, ensures
	that, what remains of the clause gadget must be (and hence must
	have been all the time) embedded in the outer face of the cage.
	We assume without loss of generality that the path containing the
	$\alpha_h,\beta_h$ and $\gamma_h$, with $h \in \{x,y,z\}$, is not
	incident to the outer face.  Again, we consider the edges incident
	to the literal vertices not as part of the construction.  Then the
	path is incident to precisely two faces, which are adjacent to the
	literal faces of the cage.  Denote the one incident to $f_+$ by
	$f_+'$ and the one incident to $f_-$ by $f_-'$; see
	Fig.~\ref{fig:clause-4}.  Due to the traversal, we have that a
	literal vertex $v$ is contained in $f_+'$ if and only if it was
	embedded in the face close to the center in the clause, which
	means that the corresponding literal was satisfied.  Otherwise, it
	is embedded in $f_-'$.  It now remains to enclose the literal
	vertices into the corresponding face of the cage without letting
	escape any of the literal vertices already embedded there.
	
	First, we wait until all edges incident to the literal vertices
	have left the sliding window, i.e., they become isolated.  Then,
	we present two new persistent edges parallel to the existing
	persistent edges $v^+c$ and $v^-c$, respectively; see
	Fig.~\ref{fig:clause-5}, where the new persistent edges are shaded
	dark gray.  To ensure that the embedded is indeed as shown in
	Fig.~\ref{fig:clause-5}, we one boundary vertex of each new
	persistent edge to a vertex on the outer boundary of the
	persistent edge it is parallel to (dashed lines in
	Fig.~\ref{fig:clause-5}).  The new parallel edges replace the old
	persistent edges of the cage, and we wait until they have
	dissolved.  Clearly, no vertex from an internal face of the cage
	can escape as the new persistent edges are embedded in the outer
	face of the cage.  To ensure that the literal vertices must indeed
	be embedded in the literal faces, we present the edges $bx$, $by$
	and $bz$.  Finally, we wait until these edges vanish again.  Then
	we are ready for the next clause sequence or for the final
	checking sequence.
	
	The above description produces for a given 3SAT formula $\varphi$
	produces, for a sufficiently large (but constant!) $\omega_0$ a
	stream $S_\varphi$ one some vertex set $V_\varphi$ such that
	$\varphi$ is satisfiable if and only if $S_\varphi$ is
	$\omega_0$-stream planar.  In the first part of the stream, in any
	sequence of corresponding planar embedding, the literals of each
	clause, represented by vertices, are transferred to two interior
	faces of the cage such that for each clause at least one literal
	vertex is transferred to the face representing satisfied literals.
	This models the fact that each clause must contain at least one
	satisfied literal.  In the second part, a sequence of edges is
	presented that is $\omega_0$-planar if and only if the previously
	produced distribution of literals to the positive and negative
	faces of the cage corresponds to a truth assignment of the
	underlying variables.  The construction can clearly be performed
	in polynomial time.
	
	We now briefly estimate the window size~$\omega_0$. The largest
	number of edges that are simultaneously important in our
	construction occurs when presenting a clause gadget.  A clause
	gadget has 48 edges, and it is simultaneously visible with four
	persistent edges, each of which may use up to 16 edges immediately
	after they have refreshed.  Hence a window size of $\omega_0 =
	112$ suffices for the construction.
\end{proof}


\section{Algorithms for $\omega$-Stream Drawings with
  Backbone}\label{se:polynomiality}
In this section, we describe a polynomial-time decision algorithm for
the case that the backbone graph consists of a $2$-connected component
plus, possibly, isolated vertices with no edge of the stream
connecting two isolated vertices.
We call instances satisfying these properties {\em star instances}, as
the isolated vertices are the centers of edge disjoint star subgraphs
of the union graph (see Section~\ref{se:star}). 
Observe that, the requirement of the absence of edges of the stream
between the isolated vertices of a star instance seems to be quite a
natural restriction. In fact, as proved in Theorem~\ref{th:np-omega0},
dropping this restriction makes the \spproblem Problem computationally tough.
This algorithm will also serve as a subprocedure to solve the {\spBp Problem} for
$\omega=1$ with no restrictions on the backbone graph (see
Section~\ref{se:unit}).  

\subsection{Star Instances}\label{se:star}
In this section we describe an efficient algorithm to test the
existence of an \SDB for star instances (see Fig.~\ref{fig:star}(a)).
\begin{figure}[tb]
  \centering
  \begin{subfigure}{0.48\textwidth}
    \centering
    \includegraphics[height=0.5\textwidth]{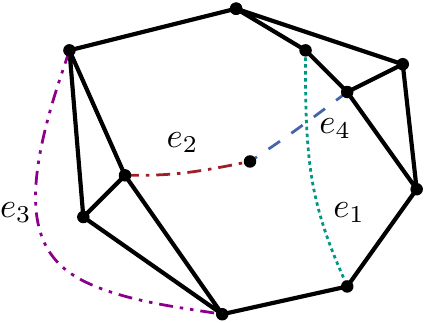}
  \end{subfigure}
  \begin{subfigure}{0.48\textwidth}
    \centering
    \includegraphics[height=0.5\textwidth]{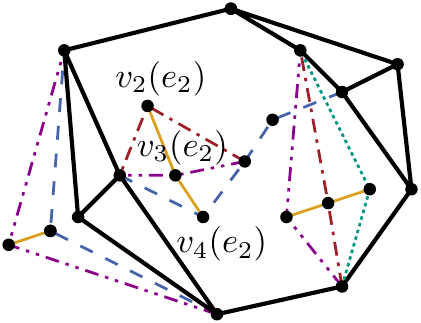}
  \end{subfigure}
  \caption{(a) A star instance with stream edges $E=\{e_i: 1\leq i \leq 4\}$, $\Psi(e_i)=i$, and $\omega=3$. (b) A SEFE of the instance of \sunsefep obtained as described in Lemma~\ref{le:starreduction} where $G_\cup$ is drawn with thick solid black edges, exclusive edges of
    $G_i$ are drawn with the same style as edge $e_i$ and exclusive
    edges of $G_{m+1}=G_{5}$ are drawn as yellow solid curves.
    Vertices in $D(e_2)=\{v_2(e_2),v_3(e_2),v_4(e_2)\}$ are also
    shown.}\label{fig:star}
\end{figure}
The problem is equivalent to finding an embedding $\mathcal{E}$ of the
unique non-trivial $2$-connected component  $\beta$ of $G$ and an assignment of the
edges of the stream and of the isolated vertices of $G$ to the faces
of~$\mathcal{E}$ that yield a \SDB.

\begin{lemma}\label{le:starreduction}
  Let \instance{} be a star instance of \spBp and let $\omega$ be a positive integer window size. There exists an
  equivalent instance \sefekinstance{m+1} of {\sc Sunflower SEFE} such
  that 
  the common graph $G_\cap$ consists of disjoint $2$-connected
  components. Further, instance \sefekinstance{m+1} can be constructed
  in $O(n+\omega{}m)$ time.
\end{lemma}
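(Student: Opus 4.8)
The goal is to transform a star instance of Streamed Planarity with Backbone into an equivalent Sunflower SEFE instance whose common graph is a disjoint union of 2-connected components. Let me think about what the two problems require and how they connect.

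In the star instance, I have a backbone graph $G$ consisting of one non-trivial 2-connected component $\beta$ plus isolated vertices, stream edges $E$ with ordering $\Psi$, and window size $\omega$. The isolated vertices are centers of edge-disjoint stars in the union graph. I need a planar embedding of $G$ and an assignment of stream edges and isolated vertices to faces.

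In Sunflower SEFE with common graph $G_\cap$, I have graphs $G_1, \dots, G_{m+1}$ all sharing the common graph, each adding its own exclusive edges. I need planar embeddings all agreeing on $G_\cap$.

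**The key correspondence.**

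The core idea should be: the common graph $G_\cap$ should essentially be the backbone $\beta$ (suitably augmented to keep isolated vertices honest), and each graph $G_i$ should encode the constraint imposed by one "time window" of the stream. Since $\omega$-stream planarity requires that edges alive simultaneously (differing by less than $\omega$ in $\Psi$) don't cross, while edges far apart CAN cross, the SEFE instances must capture exactly these simultaneity constraints.

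Let me think about the window structure. Two stream edges $e_i, e_j$ conflict iff $|\Psi(e_i) - \Psi(e_j)| < \omega$. Edges that never coexist impose no mutual crossing constraint but must still each individually respect the backbone embedding.

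**My plan.**

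The plan is to build the common graph $G_\cap$ from the 2-connected backbone component $\beta$, and to use the $m$ time-window graphs $G_1, \ldots, G_m$ to encode which stream edges are simultaneously alive, plus one extra graph $G_{m+1}$ to pin down the faces of the isolated vertices.

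First I would set the common graph to be $\beta$ (the unique non-trivial 2-connected component), possibly with auxiliary structure to anchor isolated vertices. Since $G_\cap$ is required to be a disjoint union of 2-connected components, I would attach to each isolated vertex (star center) a small 2-connected gadget — for instance replacing each isolated vertex by a triangle or a 2-connected "anchor" — so that its position relative to the faces of $\beta$ becomes a well-defined embedding choice that is shared across all SEFE graphs. The point of making these 2-connected is to force all the $G_i$ to agree on where each star center sits, mirroring the streamed-planarity requirement that each vertex is represented by the same point throughout.

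Second, I would define graph $G_i$, for $1 \le i \le m$, to consist of $G_\cap$ together with exclusive edges representing the stream edges alive at time $i$, i.e. the edges $e_j$ with $i \le \Psi(e_j) < i + \omega$. Because these are exclusive to $G_i$, the SEFE condition forces only the edges within one window to be mutually non-crossing (and non-crossing with $\beta$), while edges in different windows are in different $G_i$ and hence may cross — precisely matching the FIFO/window semantics of $\omega$-stream planarity. The sunflower property holds because every $G_i$ shares exactly $G_\cap$. I would use $G_{m+1}$ to carry the star-edge structure of the isolated vertices so that the assignment of isolated vertices to faces is consistently realized.

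**The main obstacle.**

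The hard part will be showing the equivalence is tight in both directions while respecting the sunflower structure: I must verify that an embedding of $\beta$ plus a face-assignment yielding a valid \SDB induces compatible planar embeddings $\langle \mathcal{E}_i \rangle$ agreeing on $G_\cap$, and conversely that any SEFE yields a consistent single embedding of the backbone together with a face-assignment that respects all window constraints. The delicate point is ensuring that the isolated-vertex anchoring gadgets (i) keep $G_\cap$ a disjoint union of 2-connected components, (ii) do not themselves introduce spurious crossing constraints between stream edges that should be allowed to cross, and (iii) force each star center into a single face consistently across all $m+1$ graphs — the streamed model fixes each vertex to one point, whereas SEFE a priori allows different embeddings per graph, so the 2-connectivity of the anchors is exactly what reconciles these. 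Finally, the timing bound $O(n + \omega m)$ follows because each stream edge $e_j$ appears as an exclusive edge in at most $\omega$ of the window graphs $G_i$, so the total size of all constructed graphs is $O(n + \omega m)$, and the construction visits each such occurrence once.
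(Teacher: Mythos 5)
There is a genuine gap, and it sits precisely at the step you declare unproblematic. You place each stream edge $e_j$ itself, as an edge between its original endpoints, into every window graph $G_i$ with $i \le \Psi(e_j) < i+\omega$, and then assert that ``the sunflower property holds because every $G_i$ shares exactly $G_\cap$.'' That assertion is false whenever $\omega \ge 2$: an edge is alive in $\omega$ consecutive windows, so it belongs to up to $\omega$ distinct graphs, and for consecutive windows $G_i \cap G_{i+1}$ contains every stream edge alive at both time steps, which is strictly larger than $G_\cap$ and varies with $i$. What you have built is the natural general-SEFE formulation of stream planarity (which the paper already notes in its introduction), not a \sunsefep instance; in particular it cannot be handed to the linear-time algorithm of Bl{\"a}sius et al., which is the entire point of the lemma. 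There is also no cheap repair within your scheme: promoting the shared stream edges into $G_\cap$ would force \emph{all} stream edges to be pairwise non-crossing (wrong), while keeping each edge in only one window graph would lose the constraint that an edge is drawn with a single fixed curve during its whole lifetime (also wrong).

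The paper resolves exactly this tension with two ideas absent from your proposal. First, no stream edge is ever placed in any $G_i$ directly; instead, for each time step $i$ of the lifetime of $e=(u,v)$ a fresh vertex $v_i(e)$ is added to $G_\cap$ as an isolated (trivially $2$-connected) component, and $G_i$ receives the two exclusive edges $(u,v_i(e))$ and $(v_i(e),v)$. Since the copy $v_i(e)$ carries exclusive edges only in $G_i$, these length-2 paths are genuinely exclusive and the sunflower property holds by construction. Second, the extra graph $G_{m+1}$ has exclusive edges joining $v_{\Psi(e)}(e)$ to every other copy $v_k(e)$ of the same edge, which forces all copies of $e$ into one face of the embedding of $\beta$; this recovers the fixed-curve semantics that the per-window copies would otherwise lose (without it an edge could ``jump'' between faces at different time steps, so the SEFE instance would be strictly weaker than the streamed one). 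Your $G_{m+1}$ instead anchors the star centers, which is unnecessary: isolated vertices of the backbone are trivial $2$-connected components, explicitly allowed by the paper's definition, and the SEFE condition already fixes their face assignment across all graphs, so your triangle gadgets, while harmless, address a non-problem. Your running-time argument is fine and matches the paper's.
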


\begin{proof}
	Given a star instance \instance{} of \spBp we construct an instance
	\sefekinstance{m+1} of  \sunsefep that admits a SEFE if and only if
	\instance{} admits an \SDB, as follows.  Refer
	to~Figs~\ref{fig:star}(a) and ~\ref{fig:star}(b) for an example of
	the
	construction. 
	
	Initialize graph $G_\cap$ to the backbone graph $G$. Also, for every
	edge $e \in E$, add to $G_\cap$ a set of vertices $D(e) = \{v_i(e)
	\mid \Psi(e)\leq i < \min(\Psi(e)+\omega,m+1) \}$. Observe that,
	since \instance{} is a star instance, graph $G_\cap$ contains a
	single non-trivial $2$-connected component $\beta$, plus a set of
	trivial $2$-connected components consisting of the isolated vertices
	in $\mathcal{Q} \cup \bigcup_{e \in E} D(e)$.
	
	For $i=1,\dots,m$, graph $G_i$ contains all the edges and the
	vertices of $G_\cap$ plus a set of edges defines as follows. For
	each edge $e=(u,v) \in E$ such that $0 \leq i-\Psi(e)< \omega$, add
	to $E(G_i)$ edges $(u,v_i(e))$ and $(v_i(e),v)$. From a high-level
	view, graphs $G_i$, with $i=1,\dots,m$, are defined in such a way to
	enforce the same constraints on the possible embeddings of the
	common graph as the constraints enforced by the edges of the stream
	on the possible embeddings of the backbone graph.
	
	Finally, graph $G_{m+1}$ contains all the edges and the vertices of
	$G_\cap$ plus a set of edges defined as follows. For each edge $e
	\in E$, add to $E_{m+1}$ edges
	$(v_{\Psi(e)}(e),v_{k}(e))$, with $\Psi(e) < k < min(\Psi(e)+\omega,
	m+1)$.
	Observe that, in any planar drawing $\Gamma_{m+1}$ of $G_{m+1}$,
	vertices $v_k(e)$ lie inside the same face of $\Gamma_{m+1}$, for
	any edge $e \in E$. The aim of graph $G_{m+1}$ is to combine the
	constrains imposed on the embedding of the backbone graph by each
	graph $G_i$, with $i=1,\dots,m$, in such a way that, for each edge
	$e \in E$, the edges of set $D(e)$ are embedded in the same face of
	the backbone graph.
	
	Hereinafter, given a positive instance \sefekinstance{m+1} of SEFE
	with the above properties, we denote the corresponding SEFE $\langle
	\Gamma_i \rangle^{m+1}_{i=1}$ by \sefeksolution{m+1}, where
	$\mathcal{E}_i$ represents the embedding of $\beta$ in $\Gamma_i$
	and $A_{\mathcal{E}_i}$ represents the assignment of the isolated
	vertices and of the exclusive edges of graph $G_i$ in $\Gamma_i$ to the faces of
	$\mathcal{E}_i$, for $i=1,\dots,m+1$.
	Similarly, given a positive star instance \instance{} of \spBp we denote the corresponding \SDB $\Gamma$ by $\langle \mathcal{E}, A_{\mathcal{E}}\rangle$, where $\mathcal{E}$ represents the embedding of the unique non-trivial $2$-connected component $\beta$ of $G$ in $\Gamma$ and $A_\mathcal{E}$ represents the assignment of the isolated vertices of $G$ and of the edges of the stream to the faces
	of~$\mathcal{E}$ in $\Gamma$. More formally, $A_{\mathcal{E}} \colon E \cup \mathcal{Q} \rightarrow F(\mathcal{E})$, where $F(\mathcal{E})$ denotes the set of facial cycles of $\mathcal{E}$.

	Suppose that \sefekinstance{m+1} is a positive instance of SEFE,
	that is, \sefekinstance{m+1} admits a SEFE \sefeksolution{m+1}.  We
	show how to construct a solution $\langle \mathcal{E}, A_{\mathcal{E}}\rangle$ of \instance{}.

	Since \sefeksolution{m+1} is a SEFE and $\beta \in G_\cap$, we have
	that $\mathcal{E}_i= \mathcal{E}_j$, with $1\leq i<j \leq m+1$. We
	set the embedding $\mathcal{E}$ of $\beta$ to $\mathcal{E}_1$.

	Further, for every edge $e \in E$, we set $A_\mathcal{E}(e)$ to the
	face of $\mathcal{E}_1$ vertex $v_{\Psi(e)}(e)$ is placed inside in
	$\Gamma_1$, that is,
	$A_\mathcal{E}(e)=A_{\mathcal{E}_1}(v_{\Psi(e)}(e))$. Similarly, for
	every isolated vertex $v \in \mathcal{Q}$, we set $A_\mathcal{E}(v)$
	to the face of $\mathcal{E}_1$ vertex $v$ is placed inside in
	$\Gamma_1$, that is, $A_\mathcal{E}(v)=A_{\mathcal{E}_1}(v)$.
	
	We need to prove that $\mathcal{E}$ is a planar embedding of $\beta$
	and that no crossing occurs neither between an edge in $E$ and an
	edge in $\beta$ nor between two edges $e_i \in E$ and $e_j \in E$,
	with $i<j$ and $\Psi(e_j)-\Psi(e_i)<\omega$. Observe that, since
	\sefeksolution{m+1} is a SEFE, the embedding $\mathcal{E}_i$ of
	$\beta$ in $\Gamma_i$ is planar. As $\mathcal{E}$ coincides with
	$\mathcal{E}_1$, it follows that $\mathcal{E}$ is also
	planar. Assume that there exists a crossing between an edge $e \in
	E$ and an edge of $\beta$. This implies that there exists in
	$\Gamma_{\Psi(e)}$ a path $p^*=(u,v_{\Psi(e)}(e),v)$ connecting two
	vertices of $u$ and $v$ of $\beta$ that are incident to different
	faces of $\mathcal{E}_{\Psi(e)}$. Further, assume that there exists
	a crossing between an edge $e_i \in E$ and an edge $e_j \in E$ with
	$\Psi(e_i)<\Psi(e_j)$ such that $\Psi(e_j) - \Psi(e_i)< \omega$
	inside the same face $f$ of $\mathcal{E}$. This implies that there
	exists in $G_{\Psi(e_i)}$ a crossing between a path
	$p'=(a,\dots,v_{\Psi(e_i)}(e_i),\dots,b)$ and
	$p''=(c,\dots,v_{\Psi(e_i)}(e_j),\dots,d)$ only containing exclusive
	edges of $G_{\Psi(e_i)}$ such that $a,c,b$, and $d$ apper in this
	order in the face of $\mathcal{E}_{\Psi(e_i)}$ corresponding to
	$f$. Thus, both assumptions contradict the fact that \instance{}
	admits an \SDB.

	
	Suppose that \instance{} admits an \SDB, that is, there exist a
	planar embedding $\mathcal{E}$ of $\beta$ and an assignment function
	$A_\mathcal{E}: E \cup \mathcal{Q} \rightarrow F(\mathcal{E})$ such
	that, for any two paths $p'=(a,\dots,b)$ and $p''=(c,\dots,d)$ with
	$\{a,b,c,d\} \in \beta$ and $\Psi(e_j)-\Psi(e_i)< \omega$, for every
	edge $e_i \in p'$ and $e_j \in p''$ with $i<j$, it holds that
	$A_\mathcal{E}(e_i)\neq A_\mathcal{E}(e_j)$.
	We show how to construct a SEFE \sefeksolution{m+1} of
	\sefekinstance{m+1}.
	
	For $i=1,\dots,m+1$, we set the embedding $\mathcal{E}_i$ of $\beta$
	to $\mathcal{E}$. For $i=1,\dots,m+1$ and for each edge $e \in E$,
	we assign each vertex $v_k(e) \in D(e)$ to the face of
	$\mathcal{E}_i$ that corresponds to the face of $\mathcal{E}$ edge
	$e$ is assigned to, that is,
	$A_{\mathcal{E}_i}(v_k(e))=A_{\mathcal{E}}(e)$. Also, for each edge
	$e=(u,v) \in E$, we assign edges $(u,v_k(e))$ and $(v_k(e),v)$ to
	face $A_{\mathcal{E}_k}(v_k(e))$, with $\Psi(e)\leq k <
	min(\Psi(e)+\omega,m+1) $.  Further, for each edge $e=(u,v) \in E$,
	we assign edges $(v_{\Psi(e)},v_k(e))$ to face
	$A_{\mathcal{E}_{m+1}}(v_k(e))$, with $\Psi(e)< k <
	min(\Psi(e)+\omega,m+1)$.  Finally, for $i=1,\dots,m+1$ and for each
	vertex $v \in \mathcal{Q}$, we set
	$A_{\mathcal{E}_i}(v)=A_{\mathcal{E}}(v)$.
	
	In order to prove that \sefeksolution{m+1} is a SEFE of
	\sefekinstance{m+1} we show that (i) $\mathcal{E}_i$ is a planar
	embedding of $\beta \in G_i$ (ii) all embeddings $\mathcal{E}_i$
	coincide, (ii) there exists no crossing in $\Gamma_i$ involving the
	exclusive edges of any graph $G_i$, and (iv) each isolated vertex
	$v$ of $G_\cap$ is such that
	$A_{\mathcal{E}_i}(v)=A_{\mathcal{E}_j}(v)$, with $i \neq j$.  Since
	$\mathcal{E}$ is planar by hypothesis and since
	$\mathcal{E}_i=\mathcal{E}$, condition (i) is trivially
	verified. Further, by construction, conditions (ii) and (iv), are
	also satisfied.
	Assume that condition (iii) does not hold. In this case, either an
	exclusive edge $(v_i(e),w)$ of $G_i$ crosses an edge of $\beta$ or
	there exists a crossing between two exclusive edges $(v_i(e_1),p)$
	and $(v_i(e_2),q)$ of $G_i$ inside the same face of $\mathcal{E}_i$.
	In the former case, there must exists in $G_i$ a path
	$p_0=(a,v_{i}(e),b)$ composed of exclusive edges of $G_i$ connecting
	two vertices ${a,b} \in \beta$ (not necessarily different from $w$)
	that lie on the boundary of different faces of
	$\mathcal{E}_i$. However, this would imply that $G_\cup$ contains a
	path $p^*_0=(a,\dots,b)$ containing edge $e$ and only consisting of
	edges $e_k$ with $0 \leq i - \Psi(e_k) < \omega$, whose endpoints
	$a$ and $b$ lie on different faces of $\mathcal{E}$.
	In the latter case, there must exist two vertex-disjoint paths
	$p_1=(a,\dots,v_i(e_1),\dots,b)$ and
	$p_2=(c,\dots,v_i(e_2),\dots,d)$ of exclusive edges of $G_i$
	contained in a face $f$ of $\mathcal{E}_i$ connecting vertices
	${a,b} \in f$ and ${c,d} \in f$, respectively, such that $a,c,b$,
	and $d$ appear in this order along $f$.  However, this would imply
	that $G_\cup$ contains two paths $p^*_1=(a,\dots,b)$ and
	$p^*_2=(c,\dots,d)$ with endpoints in $\beta$ containing edges $e_1$
	and $e_2$, respectively, and only containing edges $e_k$ in $E$ with
	$0 \leq i - \Psi(e_k) < \omega$ that lie inside the face $f^*$ of
	$\mathcal{E}$ corresponding to face $f$ of $\mathcal{E}_i$ and whose
	endpoints alternate along the boundary of $f^*$. Thus, both
	assumptions contradict the fact that \instance{} admits an \SDB.
	
	It is easy to see that instance \sefekinstance{m+1} can be
	constructed in time $O(n + \omega{}m)$. In fact, the
	construction of the common graph $G_\cap$ takes $O(n)$-time, since
	the backbone graph $G$ is planar. Also, each graph $G_i$ can be
	encoded as the union of a pointer to the encoding of $G_\cap$ and of
	the encoding of its exclusive edges. Further, each graph $G_i$, with
	$i = 1,\dots, m $, has at most $\omega$ exclusive edges, and graph
	$G_{m+1}$ has at most $\omega{}m$ exclusive edges. This
	concludes the proof of the lemma.
\end{proof}

Lemma~\ref{le:starreduction} provides a straight-forward technique to
decide whether a star istance \instance{} of \spBp admits a \SDB. First,
transform instance \instance{} into an equivalent instance
\sefekinstance{m+1} of SEFE of $m+1$ graphs with sunflower
intersection and such that the common graph consists of disjoint
$2$-connected components, by applying the reduction described in the
proof of Lemma~\ref{le:starreduction}. Then, apply to instance
\sefekinstance{m+1} the algorithm by Bl{\"a}sius {\em et
  al.}~\cite{bkr-seeorpc-13} that tests instances of SEFE with the
above properties in linear time. Thus, we obtain the following
theorem.

\begin{theorem}\label{th:algo-star}
  Let \instance{} be an star instance of \spBp. There exists an
  $O(n+ \omega{}m)$-time algorithm to decide whether \instance{} admits an \SDB.
\end{theorem}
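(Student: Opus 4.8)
The plan is to obtain this theorem as an immediate consequence of Lemma~\ref{le:starreduction} combined with the linear-time algorithm of Bl{\"a}sius et al.~\cite{bkr-seeorpc-13}. First I would invoke Lemma~\ref{le:starreduction} to transform the given star instance \instance{} into an equivalent instance \sefekinstance{m+1} of \sunsefep. By the lemma, this instance admits a SEFE if and only if \instance{} admits an \SDB, and its common graph $G_\cap$ consists of disjoint $2$-connected components, namely the single non-trivial component $\beta$ together with the trivial components formed by the isolated vertices. Crucially, the lemma also guarantees that this reduction runs in $O(n+\omega m)$ time, and its proof implicitly bounds the encoded size of the resulting instance by $O(n+\omega m)$: the common graph is planar and hence of size $O(n)$, while the exclusive edges total $O(\omega m)$, since each $G_i$ with $i\le m$ contributes at most $\omega$ of them and $G_{m+1}$ contributes at most $\omega m$.

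Second, I would apply to \sefekinstance{m+1} the algorithm of Bl{\"a}sius et al.~\cite{bkr-seeorpc-13}, which decides SEFE in time linear in the instance size whenever the input has sunflower intersection and its common graph is a disjoint union of $2$-connected components. Since Lemma~\ref{le:starreduction} produces precisely such an instance, the hypotheses of the cited algorithm are met, so the test runs in time linear in the size of \sefekinstance{m+1}, i.e., in $O(n+\omega m)$. Combining the two phases, the overall running time is $O(n+\omega m)+O(n+\omega m)=O(n+\omega m)$, and correctness follows directly from the equivalence established in Lemma~\ref{le:starreduction}.

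The one point requiring care is the running-time bookkeeping, and this is where I expect the main (indeed essentially the only nontrivial) obstacle to lie. One must ensure that neither the reduction nor the subsequent SEFE test secretly incurs a cost of $\Theta(mn)$ by re-examining the common graph once for each of the $m+1$ graphs $G_i$. The reduction avoids this through the pointer-based encoding described in the proof of Lemma~\ref{le:starreduction}, where each $G_i$ is stored as a reference to the shared $G_\cap$ together with its own exclusive edges. Granting that the algorithm of~\cite{bkr-seeorpc-13} likewise exploits the sunflower structure, processing the shared common graph only once rather than operating on $m+1$ independent copies, the claimed bound of $O(n+\omega m)$ follows and the theorem is immediate.
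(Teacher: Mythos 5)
Your proposal is correct and follows essentially the same route as the paper: the paper likewise obtains Theorem~\ref{th:algo-star} by applying the reduction of Lemma~\ref{le:starreduction} and then running the linear-time algorithm of Bl\"asius et al.~\cite{bkr-seeorpc-13} on the resulting {\sc Sunflower SEFE} instance whose common graph consists of disjoint $2$-connected components. Your extra care about the pointer-based encoding and the $O(n+\omega m)$ instance size is exactly the bookkeeping the paper relegates to the proof of Lemma~\ref{le:starreduction}.
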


\subsection{Unit window size}\label{se:unit}
 In this section we describe a
polynomial-time algorithm to test whether an instance \instance{} of
\spBp admits an \SDB for $\omega=1$.
Observe that, in the case in which $\omega=1$, the \spBp Problem
equals to the problem of deciding whether an embedding of
the backbone graph exists such that the endpoints of each
edge of the stream lie on the boundary of the same face of such an embedding.

Let $\mathcal{G}_1,\dots,\mathcal{G}_{1(G)}$ be the connected components of the backbone graph $G$.
Given an embedding $\mathcal{E}$ of $G$, we define
the set $F(\mathcal{E})$  of facial cycles of $\mathcal{E}$ as
the union of the facial cycles of the embeddings $\mathcal{E}_i=\mathcal{E}|_{\mathcal{G}_i}$ of
each connected component $\mathcal{G}_i$ of $G$ in
$\mathcal{E}$.
We first prove an auxiliary lemma which allows us to focus our attention only on instances whose backbone graph contains at most one non-trivial connected component.

\newcommand{\lemmaCONNECTED}{  
Let \instance{} be an instance of \spBp. There exists a set of
  instances \instance{i} whose backbone graph $G(V_i,S_i)$ contains at
  most one non-trivial connected component $\mathcal{G}_i$ 
  such that \instance{} admits a \SDB with $\omega=1$ if and only if
  all instances \instance{i} admit a \SDB with $\omega=1$. Further,
  such instances can be constructed in $O(n+m)$ time.}
\begin{lemma}\label{le:connected}
\lemmaCONNECTED
\end{lemma}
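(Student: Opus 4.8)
The plan is to exploit the fact, noted just before the lemma, that for $\omega=1$ only one stream edge is ever visible, so an instance \instance{} is a yes-instance exactly when the backbone $G$ admits a single planar embedding $\mathcal{E}$ in which the two endpoints of every stream edge lie on a common face. Because stream edges never have to avoid one another, the only coupling between the connected components $\mathcal{G}_1,\dots,\mathcal{G}_{1(G)}$ of $G$ is that they share this one embedding, i.e.\ they must be nested consistently in the plane. I would therefore build one output instance \instance{i} per non-trivial component $\mathcal{G}_i$, whose backbone is $\mathcal{G}_i$ together with some freshly added isolated vertices, and copy every stream edge with both endpoints in $\mathcal{G}_i$ into \instance{i} unchanged. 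The entire difficulty is to re-encode the \emph{inter-component} stream edges, and the original isolated vertices, so faithfully that solving the instances separately becomes equivalent to solving \instance{}.

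The structural fact I would establish first is that nesting $\mathcal{G}_j$ inside a face $f$ of another component makes co-facial precisely the vertices of $\partial f$ and the vertices of $\mathcal{G}_j$ on its chosen outer face; hence a set of components that must be pairwise co-facial along a cycle of inter-component edges can be realised only by placing all of them around one common face. To make this precise I introduce an auxiliary graph $H$ whose nodes are the non-trivial components, with one edge per inter-component stream edge and, for each original isolated vertex (more generally, each connected set of isolated vertices joined by stream edges), a clique on all non-trivial components it attaches to. The clique is forced because a single isolated vertex sits inside one face only and therefore, unlike a non-trivial component, cannot separate its neighbours into distinct faces. I would then prove that in every valid embedding the members of one block of $H$ (a maximal $2$-connected subgraph) must all bound a single common face, while distinct blocks may be routed into distinct faces across their shared cutvertex component. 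Accordingly, for each $\mathcal{G}_i$ and each block $B\ni\mathcal{G}_i$ I add to \instance{i} one new isolated vertex adjacent, through stream edges, to every endpoint in $\mathcal{G}_i$ of an inter-component edge leading to another member of $B$; placing that vertex forces exactly those endpoints onto one face, which is precisely the constraint $B$ imposes.

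For the forward direction, given an \SDB of \instance{} I restrict the embedding to each $\mathcal{G}_i$ and, using the block-is-co-facial claim, draw each added isolated vertex on the face of $\mathcal{G}_i$ bounding the common region of its block, so that all intra-component and all newly added stream edges are satisfied. For the backward direction I take the per-component embeddings returned for the instances \instance{i}, root the block-cutvertex tree of $H$, and assemble a global embedding of $G$ by realising each block as one face and placing its member components as siblings around that face, letting each component present the face on which its added isolated vertices were drawn; since the block-cutvertex tree is a tree these nestings are mutually consistent, every inter-component edge becomes co-facial inside the region of its block, and the original isolated vertices can be dropped into the matching regions. Finally $H$, its blocks and its block-cutvertex tree are computable in linear time, the number of added vertices and edges is $O(m)$, and embedding $G$ costs $O(n)$, giving the stated $O(n+m)$ bound.

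The step I expect to be the main obstacle is the structural claim that an entire block of $H$ must be drawn around a single common face, i.e.\ that a cycle of co-faciality constraints cannot be realised without merging its components into one region, together with the reassembly of the independently computed component embeddings into one planar embedding along the block-cutvertex tree. The recurring subtlety is the asymmetry of trivial components: an isolated vertex cannot act as a separator and so must be modelled as a clique in $H$ rather than as an ordinary node.
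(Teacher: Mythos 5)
Your overall strategy (one sub-instance per non-trivial component, an auxiliary constraint graph $H$ on the components, and the claim that each block of $H$ must be realized around a single common face) is sound in principle, and your structural claim is in fact true: the arrangement faces of an embedded disconnected graph induce a tree of cliques on the components, so any $2$-connected subgraph of the co-faciality constraint graph must live inside one such clique. The genuine gap is in the encoding, and it is not cosmetic: connected groups of original isolated vertices that attach (via stream edges) to \emph{exactly one} non-trivial component $\mathcal{G}_i$ disappear from your construction entirely. Such a group contributes a ``clique'' on a single node of $H$, hence no edge of $H$, hence it lies in no block and triggers no gadget vertex; and since you only copy into \instance{i} the stream edges with both endpoints in $\mathcal{G}_i$, every constraint carried by the group is silently discarded. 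These constraints are nontrivial: the group forces all of its attachment endpoints in $\mathcal{G}_i$ onto one facial cycle. Concretely, take the backbone $G=K_4$ plus one isolated vertex $u$ and stream edges from $u$ to all four vertices of $K_4$, with $\omega=1$. This instance is negative ($u$ lies in a single face, and every face of $K_4$ is a triangle incident to only three vertices), yet your reduction outputs a single instance with backbone $K_4$ and an empty stream, which is positive. Note that this lost class is exactly the class of star instances that Theorem~\ref{th:algo-star} exists to decide, so what is dropped is the core of the problem, not a fringe case. (Read strictly, your gadget wiring has the same defect even for groups attached to two or more components: you connect the gadget vertex only to endpoints of \emph{direct} inter-component edges ``leading to another member of $B$'', so the attachment endpoints of edges into the isolated group are again not recorded; this at least is fixable by a more generous reading.)

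The gap stems precisely from the modelling decision you flag as a subtlety: representing an isolated-vertex group \emph{only} as a clique on its attached components is correct for ruling out its use as a separator, but it erases the vertex-level information carried by its incident edges whenever the clique is trivial. A minimal repair is to add, for every group $Q$ and every component $\mathcal{G}_i$ it attaches to, a gadget isolated vertex in \instance{i} joined to all endpoints in $\mathcal{G}_i$ of stream edges into $Q$ (merged with your block gadget when $Q$'s clique lies in a block). The paper avoids this bookkeeping altogether by a contraction argument: instance \instance{i} keeps \emph{every} other connected component of the backbone --- including every original isolated vertex --- as a contracted placeholder vertex, and retains all stream edges re-targeted to these placeholders; equivalence is then shown by trivial restriction in one direction and, in the other, by composing the drawings (plugging each component's drawing into the face where its placeholder was embedded), with a laminar-family consistency check playing the role of your block-co-faciality claim. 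So your route can be salvaged, but as written the reduction answers \texttt{YES} on negative instances.
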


   \begin{proof}
%
   	We construct instances \instance{i} starting from $G_\cup$ in two
   	steps. To ease the description, we assume that each vertex $v \in
   	V$ is initially associated with an index $l(v)$ corresponding to the connected component of $G$ vertex $v$ belongs to, that is,
   	$l(v)=i$ if $v \in V(\mathcal{G}_i)$.
   	First, we recursively contract each edge
   	$(u,v)$ of  $G_\cup$ with $\{u,v\} \subseteq V(\mathcal{G}_i)$ to a single vertex $w$ and set $l(w)=i$, for
   	$i=1,\dots,1(G)$. Thus, obtaining an auxiliary graph $H$ on $1(G)$
   	vertices. Then, we obtain instances \instance{i} from $H$ by
   	recursively uncontracting each vertex $w$ with $l(w)=i$, for
   	$i=1,\dots,1(G)$. Note that, by construction, $\mathcal{G}_i \subseteq 	G(V_i,S_i)$.
   	
   	Observe that, the construction of $H$ requires
   	$O(n+m)$ time. Further, the construction of each instance
   	\instance{i} can be performed in $O(n_i+m_i)$ time, where
   	$n_i=|V(\mathcal{G}_i)|$ and $m_i$ is the number of edges in $E$
   	that are incident to a vertex of $G_i$, which sums up to
   	$O(n+m)$ time in total for all $1 \leq i \leq 1(G)$. Thus, proving the $O(n+m)$
   	running time of the construction.
   	
   	The necessity is trivial. In order to prove the sufficiency, assume that
   	all instances \instance{i} admit a \SDB for $\omega=1$.  
   	%
   	Intuitively, a \SDBp{1} $\Gamma$ of the original
   	instance can be obtained, starting from a \SDBp{1} $\Gamma_i$ of any \instance{i}, by
   	recursively replacing the drawing of each isolated vertex $v_j \in
   	\mathcal{Q}_i$ with the \SDBp{1} $\Gamma_j$ of \instance{j} (after, possibly, promoting a different face to be the outer face of $\Gamma_j$) .
   	For a complete example, see Fig.~\ref{fig:connected-composition}.
   	The fact that $\Gamma$ is a \SDBp{1} of \instance{} derives from the fact that
   	each $\Gamma_i$ is a \SDBp{1} of \instance{i}, that in a \SDBp{1} crossings among edges in $E$ do not matter,  and that, by the
   	connectivity of the union graph, the assignment of the isolated
   	vertices in $\mathcal{Q}_i$ to the faces of the embedding
   	$\mathcal{E}_i$ of $\mathcal{G}_i$ in $\Gamma_i$ must be such that any two isolated vertices connected by a path of
   	edges of the stream $E_i$ lie inside the same face of
   	$\mathcal{E}_i$. In the following, we prove this direction more formally.
   	
   	We denote by
   	$(\mathcal{E}_i, C_{\mathcal{E}_i})$ the solution of instance
   	\instance{i}, where $\mathcal{E}_i$ is a planar embedding of
   	$\mathcal{G}_i$ and $C_{\mathcal{E}_i}: F(\mathcal{E}_i)
   	\rightarrow 2^{\mathcal{Q}_i}$ is an assignment of the set of
   	isolated vertices $\mathcal{Q}_i$ of $G(V_i,S_i)$ to the set of faces of
   	$\mathcal{E}_i$, denoted by $F({\mathcal{E}_i})$.
   	We now show how to extend the solutions $(\mathcal{E}_i,
   	C_{\mathcal{E}_i})$ of instances \instance{i}, with
   	$i=1,\dots,1(G)$, to a solution $\langle\mathcal{E}, C_\mathcal{E}\rangle$ of
   	instance \instance{}, where $\mathcal{E}$ is a planar
   	embedding of $G$ defining the set of facial cycles and
   	$C_{\mathcal{E}}: F(\mathcal{E}) \rightarrow 2^{\{1,\dots,1(G)\}}$ is
   	an assignment of the connected components of $G$ to the faces of
   	$\mathcal{E}$. 
   	
   	To obtain $\mathcal{E}$, we set the rotation scheme of each vertex
   	$v$ of $G$ in $\mathcal{E}$ to the rotation scheme of $v$ in the
   	embedding $\mathcal{E}_i$ of the component $\mathcal{G}_i$ of the backbone graph
   	$G$ containing $v$.
   	Clearly, the set of facial cycles $F(\mathcal{E})$ of
   	$\mathcal{E}$ is equal to the union of the set of facial cycles of
   	each $\mathcal{E}_i$, that is, for each face $f \in \mathcal{E}$,
   	we have that $f$ belongs to $\mathcal{E}_i$ for some $1\leq i\leq
   	1(\mathcal{G})$.
   	
   	The assignment function $C_\mathcal{E}$ can be defined as follows.
   	Initialize $C_\mathcal{E}(f)=\emptyset$, for each facial cycle $f$
   	in $F(\mathcal{E})$.  Then, consider each pair of connected components $\mathcal{G}_i$ and $\mathcal{G}_j$ of the backbone graph and, for each facial cycle $f$ in
   	$F(\mathcal{E}) \cap  F(\mathcal{E}_j)$,
   	set $C_\mathcal{E}(f)=C_\mathcal{E}(f)\cup i$ if
   	$i \in C_{\mathcal{E}_j}(f)$.
   	
   	We now prove that $(\mathcal{E},C_{\mathcal{E}})$ is a solution
   	for \instance{}. Since each $\mathcal{E}_i$ is a planar embedding,
   	then $\mathcal{E}$ is also planar. We just need to prove that for
   	every two faces $f'$ and $f''$ of $\mathcal{E}$ either (i)
   	$C_\mathcal{E}(f')\subseteq C_\mathcal{E}(f'')$, or (ii)
   	$C_\mathcal{E}(f'')\subseteq C_\mathcal{E}(f')$, or (iii)
   	$C_\mathcal{E}(f')\cap C_\mathcal{E}(f'')= \emptyset$. Clearly, if
   	$f',f'' \in \mathcal{E}_i$ for some $i$, exactly one of (i), (ii),
   	and (iii) must hold, as otherwise $(\mathcal{E}_i,
   	C_{\mathcal{E}_i})$ would not be a solution of \instance{i}.  We
   	prove that there exist no $f' \in \mathcal{E}_i$ and $f'' \in
   	\mathcal{E}_j$ with $i\neq j$ such that neither (i), (ii), or
   	(iii) holds.  We distinguish three cases according to whether $j
   	\in C_{\mathcal{E}_i}(f')$, or $i \in C_{\mathcal{E}_j}(f'')$, or
   	$j \notin C_{\mathcal{E}_i}(f') \wedge i \notin
   	C_{\mathcal{E}_j}(f'')$. By the connectivity of the union graphs
   	of each instance and by the fact that $(\mathcal{E}_i,
   	C_{\mathcal{E}_i})$ and $(\mathcal{E}_j, C_{\mathcal{E}_j})$ are
   	\SDB of \instance{i} and \instance{j}, respectively, we have that:
   	(i) must hold, if $i \in C_{\mathcal{E}_j}(f'')$; (ii) must hold,
   	if $j \in C_{\mathcal{E}_i}(f')$; and (iii) must hold, if $j \notin
   	C_{\mathcal{E}_i}(f') \wedge i \notin C_{\mathcal{E}_j}(f'')$. This concludes the proof of the lemma.
   \end{proof}

By Lemma~\ref{le:connected}, in the following we only
consider the case in which the backbone graph consists of a single
non-trivial connected component plus,
possibly, isolated vertices.  
We now present a simple recursive algorithm to test instances with this property.

\noindent{\\\underline{Algorithm \algorithmbf.}}
\begin{itemize}
\item[$\circ$ {\em INPUT:}] an instace $I=$~\instance{} of the {\sc \spBp} Problem
with $\omega=1$ with union graph $G_\cup$ such that $G$ contains at most one non-trivial connected component. 
\item[$\circ$ {\em OUTPUT:}] \texttt{YES}, if \instance{} is positive, or \texttt{NO}, otherwise.
\end{itemize}

{\bf BASE CASE~1:} instance $I$ is such that $2(G)=0$, that is, every connected component of $G$ is an isolated vertex.
Return \texttt{YES}, as instances of this kind are trivially positive. 

{\bf BASE CASE~2:} instance $I$ is such that (i) $2(G)=1$,
that is, the backbone graph $G$ consists of a single
$2$-connected component plus, possibly, isolated vertices and (ii) no
edge of the stream connects any two isolated vertices.
In this case, apply the algorithm of Theorem~\ref{th:algo-star} to decide $I$ and return \texttt{YES}, if the test succeeds, or \texttt{NO}, otherwise.

{\bf RECURSIVE STEP:} instance $I$ is such that either (\texttt{CASE~R1})
$2(G)=1$ and there exists edges of the stream between pairs of
isolated vertices or (\texttt{CASE~R2}) \mbox{$2(G)>1$}. First, replace instance $I$ with two smaller instances
$I^\diamond=$~\instance{\diamond} and $I^\circ=$~\instance{\circ}, as described below.
Then, 
return \texttt{YES}, if $\algorithm(I^\diamond)=$ 
\noindent
$\algorithm(I^\circ)$ $=$ \texttt{YES}, or \texttt{NO}, otherwise.

\begin{itemize}
\item[{ \em CASE~R1}.]
  Instance $I^\diamond$ is obtained from $I$ by recursively contracting every
  edge $(u,v)$ of $G_\cup$ with $\{u,v\} \nsubseteq V(\mathcal{G})$.
  Instance $I^\circ$ is obtained from $I$ by recursively contracting every
  edge $(u,v)$ of  $G_\cup$ with $\{u,v\} \subseteq V(\mathcal{G})$.
\item[{ \em CASE~R2}.]
Let $\mathcal{G}$ be the unique non-trivial connected component of $G$, let $T$ be the block-cutvertex tree of $\mathcal{G}$ rooted at any block, and let $\beta$ be any leaf block in $T$. Also, let $v$ be
  the parent cutvertex of $\beta$ in $T$.
  We first construct an auxiliary equivalent instance
  $I^*=$~\instance{*} starting from $I$ and then obtain instances $I^\diamond$
  and $I^\circ$ from $I^*$, as follows. See Fig.~\ref{fig:leafBlockSplit}
  for an illustration of the construction of instance $I^*$.
  \begin{figure}[tb]
    \centering
    \begin{subfigure}{0.48\textwidth}
      \centering
      \includegraphics[height=0.5\textwidth]{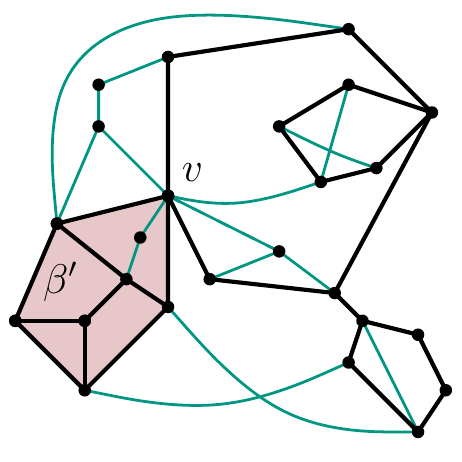}
    \end{subfigure}
    \begin{subfigure}{0.48\textwidth}
      \centering
      \includegraphics[height=0.5\textwidth]{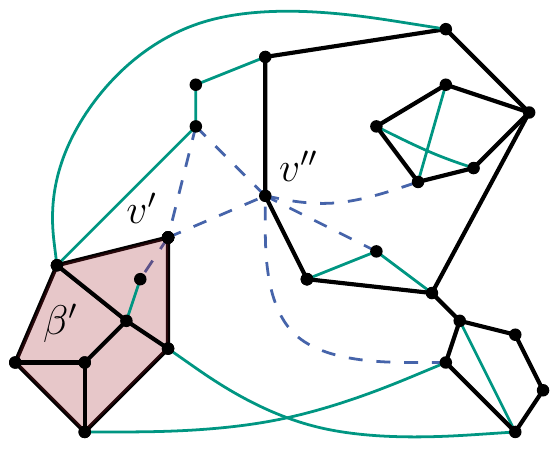}
    \end{subfigure}
    \caption{(a) Instance $I$ and (b) instance $I^*$ obtained in {\em CASE~R2}
      of Algorithm \algorithm. Edges of the backbone
      graph are black thick curves; edge of the stream are
      green thin curves; and edges of the stream incident to $v'$ and
      $v''$ in $I^*$ are blue dashed
      curves.}\label{fig:leafBlockSplit}
  \end{figure}
  Initialize $I^*$ to $I$.
  Replace vertex $v$ in $V_*$ with two vertices $v'$
  and $v''$ and make (i) $v'$ adjacent to all the vertices of
  $\beta$ vertex $v$ used to be adjacent to and (ii) $v''$ adjacent
  to all the vertices in $V(\mathcal{G})\setminus V(\beta)$ vertex $v$
  used to be adjacent to.
  Then, replace each edge $(v,x)$ of $E^*$ with edge $(v',x)$, if $x
  \in V(\beta)$ or if $x \in \mathcal{Q}^*$ and there
  exists a path composed of edges of the stream connecting $x$ to a
  vertex $y\neq v \in V(\beta)$, and edge $(v'',x)$, if $x \in V(\mathcal{G})
  \setminus V(\beta)$ or if $x \in \mathcal{Q}^*$ and there exists a
  path composed of edges of the stream connecting $x$ to a vertex
  $y \neq v \in V(\mathcal{G}) \setminus V(\beta)$. 
  %
  Finally, add edge $(v',v'')$ to $E^*$.  
  It is easy to see that
  instances $I$ and $I^*$ are equivalent.

  Instance $I^\diamond$ is obtained from $I^*$ by recursively contracting
  every edge $(u,v)$ of  $G^*_\cup$ with $u,v \nsubseteq V(\beta)$, where $G^*_\cup$ is the union graph of $I^*$.
  Instance $I^\circ$ is obtained from $I^*$ by recursively contracting
  every edge $(u,v)$ of  $G^*_\cup$ with $\{u,v\} \subseteq V(\beta)$.

\end{itemize}

\begin{theorem}
  Let \instance{} be an instance of \spBp. There
  exists an $O(n+m)$-time algorithm to decide whether
  \instance{} admits an \SDB for $\omega=1$.
\end{theorem}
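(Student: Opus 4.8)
The plan is to show that Algorithm \algorithm{} is correct and runs in $O(n+m)$ time; together with the preprocessing of Lemma~\ref{le:connected} this gives the theorem. First I would invoke Lemma~\ref{le:connected} to reduce, in $O(n+m)$ time and without changing the answer, to instances whose backbone has at most one non-trivial connected component $\mathcal{G}$, so that it suffices to analyze \algorithm{} on such instances. Correctness is then proved by induction, using as a strictly decreasing measure the pair consisting of the number $2(G)$ of (non-trivial) blocks of $\mathcal{G}$ and the number of stream edges joining two isolated vertices, ordered lexicographically. The base cases are immediate: \textsc{Base Case~1} is trivially positive, since for $\omega=1$ an instance consisting only of isolated vertices admits the \SDB{} placing every stream edge in the unique face; and \textsc{Base Case~2} is exactly a star instance, decided correctly by Theorem~\ref{th:algo-star}.

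For the inductive step it suffices to prove, in each recursive case, that $I$ is positive if and only if both $I^\diamond$ and $I^\circ$ are positive, after which the induction hypothesis applies because both subinstances satisfy the precondition and have strictly smaller measure (in \textsc{Case~R2} the block $\beta$ and the remaining component have $1$ and $2(G)-1$ blocks, both below $2(G)$; in \textsc{Case~R1} the number of stream edges between isolated vertices drops to $0$ on both sides). I would first record that $I$ and the auxiliary instance $I^*$ of \textsc{Case~R2} are equivalent: splitting the parent cutvertex $v$ into $v'$ and $v''$, redistributing the stream edges at $v$ according to whether their other endpoint lies on the $\beta$-side or the rest-side via a stream path, and inserting $(v',v'')$ only makes explicit how $\beta$ attaches to the rest of $\mathcal{G}$ at $v$ in any planar embedding. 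The point is that for $\omega=1$ the sole requirement is that the two endpoints of each stream edge be cofacial, and this requirement is insensitive to the split because the block $\beta$ occupies a single face incident to $v$.

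The core of the argument is the equivalence that $I^*$ is positive if and only if both $I^\diamond$ and $I^\circ$ are positive. For the forward direction I would take a solution embedding of $I^*$ and observe that contracting the side collapsed in each subinstance preserves planarity and can only merge faces, so every cofacial pair that survives in $I^\diamond$ (respectively $I^\circ$) stays cofacial, yielding solutions of both subinstances. For the converse, which I expect to be the main obstacle, I would glue a solution of $I^\diamond$ (an embedding of $\beta$ together with an assignment of the collapsed rest-vertex and of the $\beta$-side stream structure to its faces) with a solution of $I^\circ$ (an embedding of the rest with $\beta$ collapsed) by identifying $v'$ and $v''$ back into $v$, nesting the rest inside the face of $\beta$ to which the collapsed vertex was assigned, and symmetrically nesting $\beta$ inside the corresponding face of the rest. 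The delicate part is verifying that every stream edge remains cofacial: edges internal to $\beta$ or to the rest are handled by the two subsolutions, while an edge straddling the two sides is cofacial precisely because the bridge $(v',v'')$ forces the two chosen faces to meet at $v$, and because the connectivity of $G_\cup$ (already exploited in Lemma~\ref{le:connected}) makes the face-assignments of isolated vertices on the two sides mutually consistent. \textsc{Case~R1} is analogous but simpler: contracting $\mathcal{G}$ in $I^\circ$ isolates the stream subgraph on the isolated vertices, producing a \textsc{Base Case~1} instance, while contracting the incident stream edges in $I^\diamond$ absorbs the isolated vertices into the single block $\mathcal{G}$ and produces a star instance, and the same cofaciality bookkeeping shows that the two sides are independently satisfiable exactly when $I$ is.

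Finally, for the running time I would argue that the recursion charges each block of $\mathcal{G}$ and each stream edge only $O(1)$ times: \textsc{Case~R2} peels off one leaf block, and the constructions of $I^*$, $I^\diamond$, and $I^\circ$ (the cutvertex split, the stream-edge redistribution, and the edge contractions) are linear in the size of the affected block together with its incident stream edges, and these are disjoint across recursive calls up to the single shared cutvertex. Summing over all calls, and adding the $O(n+m)$ cost of the initial block-cutvertex decomposition and of Lemma~\ref{le:connected} together with the linear-time star tests invoked at the leaves through Theorem~\ref{th:algo-star}, gives the claimed $O(n+m)$ bound.
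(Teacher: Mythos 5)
Your proposal is correct and follows essentially the same route as the paper: preprocess via Lemma~\ref{le:connected}, then run Algorithm \algorithm{} with induction on the pair (number of blocks, number of stream edges between isolated vertices), handling the base cases by triviality and by Theorem~\ref{th:algo-star}, and splitting off a leaf block of the block-cutvertex tree via the cutvertex split into $I^\diamond$ and $I^\circ$. In fact you supply more detail than the paper does at its two tersest points---the equivalence ``$I$ positive iff $I^\diamond$ and $I^\circ$ positive,'' which the paper declares obvious, and the amortized charging argument for the $O(n+m)$ bound, which the paper asserts without elaboration---so the proposal is, if anything, a more complete version of the same proof.
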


\begin{proof}
  The algorithm runs in two steps, as follows.
  \begin{itemize}
  \item {\bf STEP~1} applies the reduction illustrated in the proof of
    Lemma~\ref{le:connected} to \instance{} to construct
    $1(G)$ instances \instance{i} such that the backbone
    graphs $G(V_i,S_i)$  contain at most one non-trivial connected component.
  \item {\bf STEP~2} applies Algorithm \algorithm to every
    instance \instance{i} and return \texttt{YES}, if all such
    instances are positive, or \texttt{NO}, otherwise.
  \end{itemize}

  Observe that, the correctness of the presented algorithm follows
  from the correctness of Lemma~\ref{le:connected}, of Theorem~\ref{th:algo-star}, and of Algorithm \algorithm. 
  We now prove the correctness for Algorithm \algorithm.
  Obviously, the fact that instances $I^\diamond$ and $I^\circ$ constructed in
  \texttt{CASE~R1} and \texttt{CASE~R2}  are both
  positive is a necessary and sufficient condition for instance $I$ to
  be positive. We prove termination by induction on the number
  $2(G)$ of blocks of the backbone graph $G$ of
  instance $I$, primarily, and on the number of edges of the stream
  connecting isolated vertices of the backbone graph, secondarily.
\begin{inparaenum}[(i)]
	\item
  If $2(G)=0$, then \texttt{BASE~CASE~1} applies and the
  algorithm stops;
\item
  if $2(G)=1$ and no two isolated vertices of the backbone
  graph are connected by an edge of the stream, then
  \texttt{BASE~CASE~2} applies and the algorithm stops;
  \item
  if $2(G)=1$ and there exist edges of the stream between
  any two isolated vertices of the backbone graph $G$, then,
  by \texttt{CASE~R1}, instance $I$ is split into (a) an instance $I^\diamond$ with
  $2(G(V_{\diamond},E_{\diamond}))=1$ and no edges of the stream connecting any two
  isolated vertices of the backbone graph $G(V_{\diamond},E_{\diamond})$, and (b) an
  instance $I^\circ$ with $2(G(V_{\circ},E_{\circ}))=0$;
\item
  finally, if $2(G)>1$, then, by \texttt{CASE~R2}, instance $I$
  is split into (a) an instance $I^\circ$ with $2(G(V_{\diamond},E_{\diamond}))=1$ and
  (b) an instance $I^\circ$ with $2(G(V_{\circ},E_{\circ}))=2(G)-1$.
\end{inparaenum}

  The running time easily derives from the fact that all instances
  \instance{i} can be constructed in $O(n+m)$-time and that the
  algorithm for star instances described in the proof of
  Theorem~\ref{th:algo-star} runs in $O(n+\omega{}m)$-time. This
  concludes the proof.
\end{proof}

\paragraph{Acknowledgments.} Giordano {Da Lozzo} was supported by the MIUR project AMANDA ``Algorithmics for MAssive and Networked DAta'', prot. 2012C4E3KT\_001. Ignaz \mbox{Rutter} was supported by a fellowship within the Postdoc-Program of the German Academic Exchange Service (DAAD). This work was done while the authors where visiting the Department of Applied Mathematics at Charles University in Prague.

   	\begin{figure}[tb]
   		\begin{subfigure}{.32\textwidth}
   			\centering
   			\includegraphics[page=1, width=\textwidth]{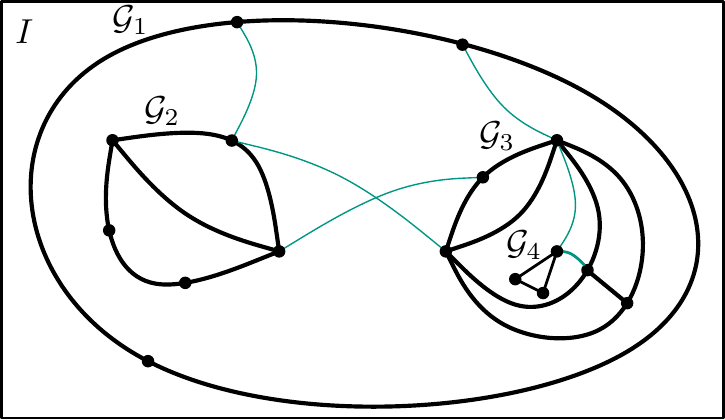}
   			\caption{}
   		\end{subfigure}
   		\begin{subfigure}{.32\textwidth}
   			\centering
   			\includegraphics[page=2, width=\textwidth]{img/connected-composition}
   			\caption{}
   		\end{subfigure}
   		\begin{subfigure}{.32\textwidth}
   			\centering
   			\includegraphics[page=3, width=\textwidth]{img/connected-composition}
   			\caption{}
   		\end{subfigure}
   		\\
   		\begin{subfigure}{.32\textwidth}
   			\centering
   			\includegraphics[page=4, width=\textwidth]{img/connected-composition}
   			\caption{}
   		\end{subfigure}
   		\begin{subfigure}{.32\textwidth}
   			\centering
   			\includegraphics[page=5, width=\textwidth]{img/connected-composition}
   			\caption{}
   		\end{subfigure}
   		\begin{subfigure}{.32\textwidth}
   			\centering
   			\includegraphics[page=6, width=\textwidth]{img/connected-composition}
   			\caption{}
   		\end{subfigure}
   		\\
   		\begin{subfigure}{.32\textwidth}
   			\centering
   			\includegraphics[page=7, width=\textwidth]{img/connected-composition}
   			\caption{}
   		\end{subfigure}
   		\begin{subfigure}{.32\textwidth}
   			\centering
   			\includegraphics[page=8, width=\textwidth]{img/connected-composition}
   			\caption{}
   		\end{subfigure}
   		\begin{subfigure}{.32\textwidth}
   			\centering
   			\includegraphics[page=9, width=\textwidth]{img/connected-composition}
   			\caption{}
   		\end{subfigure}
   		\\
   		\begin{subfigure}{.32\textwidth}
   			\centering
   			\includegraphics[page=10, width=\textwidth]{img/connected-composition}
   			\caption{}
   		\end{subfigure}
   		\begin{subfigure}{.32\textwidth}
   			\centering
   			\includegraphics[page=11, width=\textwidth]{img/connected-composition}
   			\caption{}
   		\end{subfigure}
   		\begin{subfigure}{.32\textwidth}
   			\centering
   			\includegraphics[page=12, width=\textwidth]{img/connected-composition}
   			\caption{}
   		\end{subfigure}
   		\caption{(a) Instance $I=$~\instance{} of \spBp with $\omega=1$, where $G$ consists of $4$ connected components $\mathcal{G}_1$, $\mathcal{G}_2$, $\mathcal{G}_3$, and $\mathcal{G}_4$. 
   			Edges of the backbone graph are black thick curves. Edges of the stream are green thin curves. 
   			Instances $I_1$ (b), $I_2$ (d), $I_3$ (f), and $I_4$ (h) obtained by applying the procedure described in the proof of Lemma~\ref{le:connected} to instance $I$. \SDBp{1} $\Gamma_1$ (c), $\Gamma_2$ (e), $I_3$ (g), and $I_4$ (i) of instances $I_1$, $I_2$, $I_3$, and $I_4$, respectively. (j) \SDBp{1} $\Gamma^{12}$ obtained by replacing the drawing of $\mathcal{G}_2$ in $\Gamma_1$ with $\Gamma_2$.  (k) \SDBp{1} $\Gamma^{123}$ obtained by replacing the drawing of $\mathcal{G}_3$ in $\Gamma^{12}$ with $\Gamma^{3}$.  (l) \SDBp{1} $\Gamma^{1234}$ obtained by replacing the drawing of $\mathcal{G}_4$ in $\Gamma^{123}$ with $\Gamma_4$.}\label{fig:connected-composition}
   	\end{figure}

\clearpage
{\bibliographystyle{splncs03} \bibliography{bibliography} }
\end{document}